\documentclass[10pt,conference]{IEEEtran}
\IEEEoverridecommandlockouts

\usepackage{cite}
\usepackage{amsmath}
\usepackage{amsfonts,amsmath,amssymb,bm,amsthm}
\usepackage{amsthm}
\usepackage{algorithmic}
\usepackage{graphicx}
\usepackage{textcomp}
\usepackage{xcolor}
\usepackage{braket}
\usepackage[hidelinks]{hyperref}
\usepackage[T1]{fontenc}
\usepackage{cleveref}
\usepackage{multirow}
\usepackage{booktabs}
\usepackage{adjustbox}
\usepackage{makecell}
\usepackage{pdflscape}
\usepackage{placeins}
\usepackage{multirow}
\usepackage{subcaption}
\IfFileExists{pdf-colors.tex}{\input{pdf-colors}}{}
\crefformat{equation}{(#2)#1(#3)}
\crefrangeformat{equation}{(#3#1#4)--(#5#2#6)}

\def\BibTeX{{\rm B\kern-.05em{\sc i\kern-.025em b}\kern-.08em
    T\kern-.1667em\lower.7ex\hbox{E}\kern-.125emX}}

\DeclareMathOperator{\tr}{tr}

\newtheorem{lemma}{Lemma}
\newtheorem{corollary}{Corollary}

\newtheorem{definition}{Definition}
\newtheorem{criterion}{Criterion}
\newtheorem{remark}{Remark}

\allowdisplaybreaks

\renewcommand{\thesubsubsection}{\arabic{subsubsection}}

\makeatletter 
\newcommand{\linebreakand}{%
  \end{@IEEEauthorhalign}
  \hfill\mbox{}\par
  \mbox{}\hfill\begin{@IEEEauthorhalign}
}
\makeatother 

\renewcommand{\thesubsubsection}{\arabic{subsubsection}}

\begin{document}

\title{Toward Covert Quantum Computing
\thanks{This work was supported by the National Science Foundation under Grant No.~CNS-2107265 and Noblis, Inc. We acknowledge the use of IBM Quantum Credits for this work. The views expressed are those of the authors, and do not reflect the official policy or position of IBM or the IBM Quantum team.}}

\author{\IEEEauthorblockN{Evan J. D. Anderson}
\IEEEauthorblockA{\textit{Wyant College of Optical Sciences} \\
\textit{University of Arizona}\\
Tucson, AZ, USA \\
ejdanderson@arizona.edu}
\and
\IEEEauthorblockN{Kaushik Datta}
\IEEEauthorblockA{\textit{Noblis, Inc.} \\
Reston, VA, USA\\
Kaushik.Datta@noblis.org}
\and
\IEEEauthorblockN{Boulat A. Bash}
\IEEEauthorblockA{\textit{Dept. of Elec. and Comp. Eng.} \\
\textit{College of Engineering} \\
\textit{Wyant College of Optical Sciences} \\
\textit{University of Arizona}\\
Tucson, AZ, USA\\
boulat@arizona.edu}
}

\maketitle

\begin{abstract}
As quantum computers become available through multi-tenant cloud platforms, ensuring privacy against adversaries sharing the same quantum processing unit becomes critical. We introduce and explore \emph{covert quantum computing}, a new concept that ensures an adversary with access to all other quantum computational units (QCUs) of a quantum computer cannot detect computation on the subset that they cannot access. Analogous to covert communication, we employ information theory. However, since here the adversary controls the systems used for detection, we require a richer framework for covertness analysis that accounts for the use of quantum memories and adaptive operations. Thus, we adopt the \emph{quantum-strategy} framework used in quantum game theory and memory channel discrimination. Current quantum computers use planar graph circuit layouts and typically assume nearest-neighbor crosstalk.  We derive discrete isoperimetric inequalities to show that, for an $n$-qubit circuit under this model, only $\mathcal{O}(\sqrt{n})$ border qubits provide detection information to the adversary. We then explore this scaling law on IQM's 54-qubit \emph{Emerald} processor and IBM's 156-qubit \emph{ibm\_fez} machine employing the Heron 2 architecture. We implement Ramsey experiments on qubits not used in computation, and detect nearest-neighbor crosstalk, as expected. However, we also observe long-range coupling effects beyond the border qubits, revealing a side channel that the adversary can exploit. We hypothesize that this long-range crosstalk is induced by leakage from the drive and control lines. Beyond weakening covertness, it exposes co-tenants to both adversarial and unintended crosstalk and degrades circuits that span spatially distributed qubits, motivating further work on spatial isolation and crosstalk characterization.
\end{abstract}

\begin{IEEEkeywords}
covert quantum computing, secure quantum computing
\end{IEEEkeywords}

\section{Introduction}

The resource costs of building and operating quantum computers suggest that cloud-based access will remain the dominant model for the foreseeable future. As these systems grow in qubit counts and improve gate fidelities, a natural consequence is multi-tenancy: multiple users executing quantum circuits simultaneously on disjoint subsets of the same quantum processing unit (QPU), a practice already standard in classical cloud computing~\cite{chenNISQQuantumComputing2024, xuSecuringQuantumComputer2025, volyaSecureClassicalQuantumSystems2023}. In this setting, \emph{covert quantum computing} poses a question that has yet to be directly explored: can one user detect that another user is computing on the same QPU?

In the classical setting, private computation \cite{evansPragmaticIntroductionSecure2018}, including secure multi-party computation and homomorphic encryption~\cite{gentryFullyHomomorphicEncryption2009}, protects the inputs, algorithms, and outputs of a computation, but does not conceal that computation is occurring. Blind, or private, quantum computing~\cite{broadbentUniversalBlindQuantum2009, fitzsimonsPrivateQuantumComputation2017a} is the quantum analogue, preventing an untrusted QPU provider from learning the algorithm or its results. Similarly, verifiable quantum computing~\cite{mahadevClassicalVerificationQuantum2022} allows a classical client to confirm the provider executed the requested quantum computation rather than, say, a classical approximation. In the multi-tenant setting, existing literature on QPUs focuses on whether an adversary can learn another user's circuit or results~\cite{xuSecuringQuantumComputer2025, tanKnowWhatYou2025} or corrupt their computation~\cite{arellanoQubitViseDoubleSidedCrosstalk2025, campbellSchrodingersToolboxExploring2025, harperCrosstalkAttacksDefence2025}. In these frameworks, the adversary is assumed to already know that computation is taking place.

The idea of covertness, or low probability of detection (LPD), was first established for classical communication in \cite{bashSquareRootLaw2012, bashLimitsReliableCommunication2013} and has had many follow-on works \cite{bashCovertOpticalCommunication2015, bloch15covert, wang15covert, bashHidingInformationNoise2015, gagatsosCovertCapacityBosonic2020, tahmasbi21signalingcovert, bullockFundamentalLimitsCovert2025,  zlotnick25eacqcovert,
kimelfeldCovertEntanglementGeneration2026, andersonFundamentalLimitsBosonic2021}; see \cite{chen23covcommssurvey} for a comprehensive survey. Recently, its characterization was extended to covert quantum communication \cite{anderson2025covert-isit, anderson2025CovertEntanglementJSAC, andersonAchievabilityCovertQuantum2025, andersonSquareRootLaw2024}.

In either quantum or classical covert communication, Alice attempts to transmit to Bob, while an adversary, Warden Willie,\footnote{Warden Willie moniker is borrowed from the steganography literature \cite{fridrich09stego}.} attempts to detect whether Alice transmitted.
The square-root law (SRL) governs such channels: $M(n) \propto \sqrt{n}$ bits or qubits can be reliably and covertly transmitted from Alice to Bob in $n$ channel uses.

In contrast to covert communication, where Willie's default state corresponding to Alice not transmitting is fixed by the noisy channel, covert quantum computing assumes Willie can fully control the quantum computational units (QCUs) (qubits, qudits, or qumodes) he uses to detect Alice's computation, and has access to an additional reference system and quantum memory. Thus, we require a different framework to characterize Willie's full capabilities. We adopt the quantum-strategy framework from
\cite{gutoskiGeneralTheoryQuantum2007, gutoskiMeasureDistanceQuantum2012} which is operationally equivalent to the \emph{quantum comb} and \emph{tester} framework in \cite{chiribellaMemoryEffectsQuantum2008, chiribellaTheoreticalFrameworkQuantum2009} for memory channels. This permits the description of Willie's detection strategies to include operations on both his QCUs and the reference system while Alice is performing her computation.
The present work extends Anderson's dissertation \cite{andersonCovertQuantumCommunication2025}, which formulated covert quantum computing from the more restrictive prepare-wait-and-measure point of view, where Willie prepares an initial joint state on his QCUs, idles during Alice's computation, and performs a final measurement.

Beyond the multi-tenant QPU security and covert communication, two additional results inspired our investigation of covert quantum computing. In covert cycle stealing, a malicious user attempts to add jobs to a server queue without detection.  An SRL similar to that for covert communication dictates that $\mathcal{O}(\sqrt{n})$ jobs can be added over $n$ processing periods with negligible change in job arrival statistics~\cite{jiangCovertCycleStealing2021, yardiCovertQueueingProblem2021}. However, the mechanisms underlying the covert communication and cycle-stealing bounds do not translate directly to quantum computing. A closer parallel comes from outside the covertness literature: in a wireless network with $n$ optimally placed nodes and unit area, the transport capacity scales as $\Theta(\sqrt{n})$ bit meters per second \cite{guptaCapacityWirelessNetworks2000}. Subsequent work demonstrated that the $\sqrt{n}$ scaling is fundamentally due to the number of spatial degrees of freedom crossing any cut in a planar network being determined by the length of the cut, which scales as $\sqrt{n}$ in two dimensions. The scaling is thus intrinsic to the planar geometry and is not detection or protocol dependent \cite{franceschettiCapacityWirelessNetworks2009a}.
Indeed, a similar geometric argument underlies the square-root law for the number of idling qubits that Alice needs to prevent the detection of her computation from Willie, which we derive.

Our contributions are threefold. First, we formulate covert quantum computing as an information-theory problem, providing a notion of reliable quantum computing and casting Willie's detection task as a quantum strategy discrimination between idling (null hypothesis) and an active computation (alternate hypothesis). This allows us to provide a norm-based criterion for $\delta$-covertness that accounts for an optimal adversary with arbitrary quantum memory and adaptive strategies between operations of Alice's computation.

Second, we specialize the framework above to superconducting QPUs under a nearest-neighbor crosstalk model. Developing discrete vertex-isoperimetric inequalities for the heavy-hex lattice by edge subdivision of the hexagonal lattice, and combining with a known bound on the square lattice, we prove that any $n$-qubit circuit admits a placement whose boundary has size $\mathcal{O}(\sqrt{n})$. Consequently, Alice achieves covertness with only $\mathcal{O}(\sqrt{n})$ additional idling qubits that form a physical boundary around her computation.

Third, we experimentally validate our square-root scaling on IQM's 54-qubit \emph{Emerald} (square lattice) and IBM's 156-qubit \emph{ibm\_fez} (heavy-hex lattice) processors using Ramsey interferometry on spectator qubits. Nearest-neighbor crosstalk is detectable on both devices, consistent with the theoretical model. We additionally observe long-range couplings beyond the nearest-neighbor boundary in both architectures, revealing a side channel that extends Willie's detection radius beyond the isoperimetric prediction, and motivating revised buffer constructions for covert placement.

The rest of the paper is organized as follows: in Section~\ref{sec:covert-quantum-computing-theory}, we provide an information-theoretic description of covert quantum computing. In Section~\ref{sec:scaling-bound} we describe the tools required to explore covert quantum computing on current superconducting QPUs and develop a square-root bound for the number of additional qubits required for covertness. Next, in Section~\ref{sec:experiments} we present our experimental results on IQM's \emph{Emerald} and IBM's \emph{ibm\_fez} QPUs. Finally, we close in Section~\ref{sec:conclusion} with a discussion of our results and future research directions.

\section{Covert Quantum Computing}
\label{sec:covert-quantum-computing-theory}

\subsection{Diamond-Norm Distance}
\label{sec:prelim-diamond-distance}
The \emph{diamond-norm distance} quantifies the distinguishability of quantum channels. It is a norm on the space of quantum channels and is defined as the maximum trace-norm distance between channel outputs over all input states, including those entangled with a reference system $R$~\cite[Def.~9.1.3]{wilde16quantumit2ed}: $\|\mathcal{N}-\mathcal{N}^\prime\|_{\diamond} \equiv \max _{\hat{\rho}_{R A}}\left\|\left(\hat{I}_{R} \otimes \mathcal{N}_{A \rightarrow B}\right)\left(\hat{\rho}_{R A}\right)-\left(\hat{I}_{R} \otimes \mathcal{N}^\prime_{A \rightarrow B}\right)\left(\hat{\rho}_{R A}\right)\right\|_1$. Operationally, given a single use of an unknown channel that is either $\mathcal{N}$ with prior $p$ or $\mathcal{N}^\prime$ with prior $1-p$, the maximum probability of successfully identifying the channel is $p_{succ}=\tfrac{1}{2}+\tfrac{1}{2}\|p\mathcal{N}-(1-p)\mathcal{N}^\prime\|_{\diamond}$. As such, the diamond-norm distance may be defined with the $\frac{1}{2}$ prefactor. Thus, the diamond-norm distance is the channel analogue of trace distance for state discrimination.

\subsection{Reliability}
Analogous to covert communication, covert quantum computing without reliability is trivial. Alice can be undetectable by simply idling.
We, therefore, require that Alice's computation is first reliable, that is, her physical implementation on noisy systems approximates the target computation within an acceptable error.
We use quantum computational units (QCUs) as the basic resources for quantum computation (i.e., qubits, qudits, and qumodes) and formalize reliability using quantum instruments \cite{daviesOperationalApproachQuantum1970, ozawaQuantumMeasuringProcesses1984}\cite[Sec.~4.6.8]{wilde16quantumit2ed} and the diamond-norm distance defined in Section \ref{sec:prelim-diamond-distance}. Quantum instruments generalize unitary evolutions and completely positive trace-preserving (CPTP) maps to include measurement and conditional operations. This generality captures computational tasks, such as quantum teleportation protocols that require mid-circuit measurement and classical feedback \cite[Sec.~7.4]{wilde16quantumit2ed}.

\begin{definition}[Reliable Quantum Computation]\label{def:reliable-quantum-computation}
Let $Z$ be a classical register with orthonormal basis $\{\ket{z}\}_{z\in\mathcal{Z}}$, and let $\mathcal{I}=\{\mathcal{E}_z\}_{z\in\mathcal{Z}}$ be a quantum instrument representing the target computation, where each $\mathcal{E}_z$ is a completely positive, trace-non-increasing map corresponding to measurement outcome $z$, and the sum $\sum_{z\in\mathcal{Z}}\mathcal{E}_z$ is trace-preserving.

Let $\tilde{\mathcal{I}}=\left\{\tilde{\mathcal{E}}_z\right\}_{z\in\mathcal{Z}}$ denote an implementation of $\mathcal{I}$. Each instrument induces a map that outputs both the quantum state after operations and the classical outcome:
\begin{align}
\mathcal{N}^{{\mathcal{I}}}(\hat{\rho}) &\equiv \sum_{z\in\mathcal{Z}} \ket{z}\bra{z}_Z \otimes \mathcal{E}_z(\hat{\rho}),\\
\mathcal{N}^{\tilde{\mathcal{I}}}(\hat{\rho}) &\equiv \sum_{z\in\mathcal{Z}} \ket{z}\bra{z}_Z \otimes \tilde{\mathcal{E}}_z(\hat{\rho}).
\end{align}
An implementation $\tilde{\mathcal{I}}$ is $(n,\epsilon)$-reliable if it uses at most $n$ quantum computational units and
\begin{align}
\|\mathcal{N}^{{\mathcal{I}}} - \mathcal{N}^{\tilde{\mathcal{I}}}\|_\diamond \le \epsilon. \label{eq:reliability-diamond-distance}
\end{align}
\end{definition}

We use the diamond-norm distance to capture the worst-case error across all inputs, including those entangled with external reference systems. This is consistent with rigorous treatments of fault-tolerant quantum computation, where threshold theorems guarantee achievable fault tolerance for physical error rates below a constant threshold~\cite[Sec.~10.6.1]{nielsen11QuantumCompuingandQuantumInfo10th}\cite{kitaevQuantumComputationsAlgorithms1997, aharonovQuantumCircuitsMixed1998, hashimBenchmarkingQuantumLogic2023}.

\subsection{Covertness}
\label{sec:covertness}
Consider a quantum system with $N$ total QCUs. Suppose Alice can use a subset of size $n<N$ to perform her computation. We analyze Alice's covertness against the strongest possible adversary. Thus, we allow Willie access to $s = N - n$ QCUs not allocated to Alice, denoted as the joint system $W^s$, and a quantum memory $M$ of arbitrary dimension. Willie must decide between two hypotheses: $H_0$ where Alice is idle, and $H_1$ corresponding to Alice implementing $\tilde{\mathcal{I}}$.

We adopt the quantum-strategy framework \cite{gutoskiGeneralTheoryQuantum2007, gutoskiMeasureDistanceQuantum2012} (see also quantum combs and testers \cite{chiribellaMemoryEffectsQuantum2008, chiribellaTheoreticalFrameworkQuantum2009}) to characterize Willie's discrimination capabilities. In this framework, the physical executions of Alice's implementation $\tilde{\mathcal{I}}$ induce an $r$-\emph{round} strategy $S_A^{(1)}(\tilde{\mathcal{I}})$ on Willie's observations. If Alice is not computing, the process is an idle strategy $S_A^{(0)}$. 

Mathematically, an $r$-round strategy consists of $r$ ordered input-output interaction slots. Each slot has an incoming system and outgoing system, which are interpreted as \emph{messages} in this formalism. In our setting, they represent QPU-based coupling between Alice's and Willie's QCUs rather than intentional messages. Alice's strategy and Willie's co-strategy are each realized by causally ordered sequences of CPTP maps connected by a quantum memory,  equivalently represented by positive semidefinite operators satisfying causality constraints. The value of $r$ is determined by the timing of Willie's allowed interactions on $W^sM$ and not necessarily by the number of operations Alice uses for $\tilde{\mathcal{I}}$.

Willie uses an $r$-round measuring co-strategy $S_W$, the compatible opposing process to Alice's strategy with a final decision measurement, to decide between $H_0$ and $H_1$. To grant Willie maximum advantage, we assume he possesses full knowledge of $S_A^{(1)}(\tilde{\mathcal{I}})$, and Alice's individual operations chosen for implementing $\tilde{\mathcal{I}}$. Willie's detection performance is characterized by the probabilities $P_{\rm FA} = \Pr(\text{choose } H_1 | H_0)$ and $P_{\rm MD} = \Pr(\text{choose } H_0 \mid H_1)$ of false alarm and missed detection events, respectively. We assume equal prior probabilities for the hypotheses $P(H_0) = P(H_1)=\frac{1}{2}$, though this can be relaxed (see, e.g., \cite{sobersCovertCommunicationPresence2017}). Willie's total error probability is thus $P_W^{\text{(e)}}(S_W,\tilde{\mathcal{I}}) = \tfrac{1}{2}(P_{\rm FA} + P_{\rm MD})$. This leads to our definition of covertness in quantum computation:

\begin{definition}[Covert Quantum Computation]
\label{def:covert-qc}
An implementation $\tilde{\mathcal{I}}$ is $\delta$-\emph{covert} if, for any adversary strategy $S_W$, $P_W^{\mathrm{(e)}}(S_W,\tilde{\mathcal{I}}) \ge \tfrac{1}{2} - \delta$. It is \emph{covert} if it is $\delta$-covert for every $\delta > 0$.
\end{definition}

The \emph{strategy $r$-norm} quantifies the distinguishability of $r$-round strategies as the trace and diamond norms do for quantum states and channels, respectively.
It is the supremum is over all non-measuring co-strategies followed by the optimal Helstrom measurement:
\begin{align}
    \|S_0-S_1\|_{\diamond r}\equiv\sup_{S_W}\|\hat{\rho}_1(S_W) - \hat{\rho}_0(S_W)\|_1,
\end{align}
where $\hat{\rho}_1(S_W)$ and $ \hat{\rho}_0(S_W)$ are the states  at the end of $r$-round strategies $S_0$ and $S_1$. Therefore, Willie's optimal probability of error in distinguishing $S_A^{(0)}$ from $S_A^{(1)}(\tilde{\mathcal{I}})$ is \cite[Sec.~1.2]{gutoskiMeasureDistanceQuantum2012}:
\begin{align}
\min P^{\text{(e)}}_W &\equiv \inf_{S_W} P^{\text{(e)}}_W(S_W)\\&= \frac{1}{2} - \frac{1}{4}
\left\| S_A^{(1)}(\tilde{\mathcal{I}}) - S_A^{(0)} \right\|_{\diamond r} \label{eq:strategy-norm} \\
&=\frac{1}{2}-\frac{1}{4}\sup_{S_W}\|\hat{\rho}^{(1)}_{W^s M}(S_W,\tilde{\mathcal{I}})-\hat{\rho}^{(0)}_{W^s M}(S_W)\|_1.
\end{align}
The states $\hat{\rho}^{(0)}_{W^s M}(S_W)$ and $\hat{\rho}^{(1)}_{W^s M}(S_W,\tilde{\mathcal{I}})$ represent Willie's final joint state under the idle and computing hypotheses. In the remainder, we drop explicit dependence on $S_W$ and $\tilde{\mathcal{I}}$ when it is clear from the context. This yields the following covertness criterion:
\begin{criterion}\label{crit:covert-strategy-computation}
$\tilde{\mathcal{I}}$ is $\delta$-covert if and only if:
\begin{align}
\frac{1}{4}\left\| S_A^{(1)}(\tilde{\mathcal{I}}) - S_A^{(0)} \right\|_{\diamond r} \le \delta.
\end{align}
\end{criterion}

\begin{remark}[Unknown $\tilde{\mathcal{I}}$]\label{rem:unknown-i}
If Willie does not know $\tilde{\mathcal{I}}$ nor the specific operations Alice uses to implement it, then the alternate hypothesis is a composite one. Let $\mathbb{I}$ denote the set of physical implementations Willie considers possible, where each element specifies both a target instrument and a physical realization of that instrument. A sufficient covertness condition is then $
\sup_{\tilde{\mathcal{I}}\in\mathbb{I}} \frac{1}{4}\left\|S_A^{(1)}(\tilde{\mathcal{I}})-S_A^{(0)}\right\|_{\diamond r}\le \delta.$
We defer exploration of this scenario to future work.
\end{remark}

\begin{remark}[Willie's Capabilities and Special Cases]
If Willie's interaction with the QPU is represented by a single input-output channel, corresponding to $r=1$, then Criterion~\ref{crit:covert-strategy-computation} is instead defined by the diamond-norm distance: $ \tfrac{1}{4}
\|\mathcal{N}^{(1)}(\tilde{\mathcal{I}}) - \mathcal{N}^{(0)}\|_\diamond \le \delta$, where $\mathcal{N}^{(1)}(\tilde{\mathcal{I}})$ and $\mathcal{N}^{(0)}$ are the effective channels he observes on his QCUs and reference system when Alice is computing or idle, respectively.

If Willie has no reference system and no adaptive access, but may choose the initial state $\hat{\sigma}_{W^s}$ of his QCUs, then $\delta$-covertness reduces to $\sup_{\hat{\sigma}_{W^s}}\tfrac{1}{4}\|\hat{\rho}^{(0)}_{W^s} - \hat{\rho}^{(1)}_{W^s}\|_1 \le \delta$.
\end{remark}

\subsection{Unreliable Computation and Multi-Shot Instruments}
\label{sec:unreliable-computation}

The covertness criterion established in the previous section applies to a single implementation of Alice's quantum instrument. However, lacking fault tolerance, current Noisy Intermediate-Scale Quantum (NISQ) devices typically require Alice to execute $k$ independent shots of her circuit and estimate the expectation values. The quantum-instrument formalism extends to this setting via reset operations interleaved between her per-shot implementation.

We can establish the worst-case bounds by allowing Willie's capabilities to exceed those of Alice. He has a perfect, indefinitely-coherent quantum memory of arbitrary size that persists across Alice's $k$-shot quantum instrument. He extends his strategy across all $k$ shots, denoted $S_W^k$. He maintains coherence between shots, coupling the unmeasured output of one shot with the next and defers a joint measurement over all systems he controls until after the final shot. This scenario is treated the same as that described in Section~\ref{sec:covertness}, where the round structure is extended to include Willie's actions across all $k$ shots.

If Willie cannot maintain a coherent memory across the $k$ shots, then $\hat{\rho}^{(b)}_{k}(S_W^k) = (\hat{\rho}^{(b)}(S_W))^{\otimes k}$, where $S_W$ is a single-shot strategy applied identically to each shot and $b\in\{0,1\}$. 
Quantum relative entropy (QRE), given by $D(\hat{\rho}\|\hat{\sigma})=\tr(\hat{\rho}(\log\hat{\rho}-\log\hat{\sigma}))$ for arbitrary quantum states $\hat{\rho}$ and $\hat{\sigma}$, is additive for product states.  Utilizing the quantum Pinsker's inequality \cite[Thm.~11.9.1]{wilde16quantumit2ed}, we have:
\begin{align}
&\sup_{S_W}\frac{1}{4}\|(\hat{\rho}^{(0)}_{W^sM}(S_W))^{\otimes k} -
(\hat{\rho}^{(1)}_{W^sM}(S_W))^{\otimes k}\|_1  \notag \\
&\phantom{\sup_{S_W}} \leq \sqrt{\frac{k}{8}\sup_{S_W}
D(\hat{\rho}^{(1)}_{W^sM}(S_W) \|
\hat{\rho}^{(0)}_{W^sM}(S_W))}, \label{eq:pinskers}
\end{align}
where the factor of $k$ on the right-hand side is due to additivity. Now, let $\delta_{\text{QRE}}$ be a bound on the QRE such that $\sup_{S_W} D(\hat{\rho}^{(1)}_{W^sM}(S_W) \|\hat{\rho}^{(0)}_{W^sM}(S_W)) \le\delta_{\text{QRE}}$. Setting $\delta_{\text{QRE}} = 8\delta^2$ ensures that any per-shot quantum instrument satisfying the QRE bound also satisfies Criterion~\ref{crit:covert-strategy-computation} for a single shot. For $k$ independent shots, the same per-shot bound yields
\begin{align}
\sup_{S_W}\frac{1}{4}\|(\hat{\rho}^{(1)}_{W^sM}(S_W))^{\otimes k} -(\hat{\rho}^{(0)}_{W^sM}(S_W))^{\otimes k}\|_1 \le \delta\sqrt{k}.
\end{align}
Thus, this gives $\delta\sqrt{k}$-covertness over $k$ shots.

In this case, we have granted Willie full knowledge of Alice's quantum instrument, and under the assumption of stable environment noise and no inter-shot memory effects, the per-shot states are i.i.d.~under each hypothesis. The quantum Stein lemma~\cite{ogawaStrongConverseSteins2000a} then characterizes the asymptotic decay of Willie's optimal missed-detection probability:
\begin{align}
P_{\rm MD}^{\star}(\alpha) = \exp\left[-k \sup_{S_W}D\left(\hat{\rho}^{(1)}(S_W) \middle\| \hat{\rho}^{(0)}(S_W) \right)+ o(k)\right],
\end{align}
where $P_{\rm MD}^{\star}(\alpha)$ denotes the minimum missed-detection probability over all tests satisfying $P_{\rm FA}\le\alpha$. Additionally, the terms hidden in $o(k)$ depend on $\alpha$ but vanish for $k\to\infty$.

\begin{remark}
The multi-shot setting also admits additional covertness methods and refinements for Alice. Here, Willie is attempting to detect if Alice is computing at any time over the $k$ shots. For a fixed shot budget, she may interleave idle shots trading expectation-value precision for a reduced per-shot detection signal, potentially altering Willie's optimal strategy. Bounds for QRE in covert signaling \cite{tahmasbi21signalingcovert, bullockFundamentalLimitsCovert2025, zlotnick25eacqcovert, kimelfeldCovertEntanglementGeneration2026} and perturbation theory \cite{grace2022perturbation} can be adapted to analyze this setting. Furthermore, in the composite-hypothesis setting of Remark~\ref{rem:unknown-i}, Willie does not know $\tilde{\mathcal{I}}$, so the alternate-hypothesis state $\hat{\rho}^{(1)}_{k}(S_W^{k})$ is a family of states rather than a single state. The generalized quantum Stein lemma is then required. We defer both explorations to future work.
\end{remark}

The capabilities granted to Willie in this section, including coherent memory, arbitrary operations per round, and coherence preserved across shots, exceed what NISQ devices support. Indeed, these devices not only limit Alice but also Willie.
We explore the impact of these limitations next.

\section{Square-Root Law for Covert Quantum Computing on Superconducting Architectures}
\label{sec:scaling-bound}

We specialize the formalism of the previous section to NISQ superconducting architectures. On these devices, Alice and Willie's QCUs are qubits arranged on planar lattices, gate fidelities and coherence times bound circuit strategy and depth, and coupling to an external memory is currently infeasible, but has been proposed as an architecture to scale beyond a few thousand qubits~\cite{bravyiFutureQuantumComputing2022, yoderTourGrossModular2025}. The quantum information passed to Willie at each round arises from crosstalk between Alice and Willie's qubits. Willie uses spectator qubits to sense these effects, which we describe next. We then characterize the dominant crosstalk mechanisms, and finally, we use the discrete isoperimetric inequalities to show that Alice can remain covert by employing only $\mathcal{O}(\sqrt{n})$ additional qubits.

\subsection{Spectator Qubits}

\emph{Spectator qubits} are ancillary qubits observed during a computation or idling, rather than used to encode data. These qubits are typically used to sense temporal drifts in controls \cite{majumderRealtimeCalibrationSpectator2020}, understand and correct miscalibrations \cite{guptaIntegrationSpectatorQubits2020}, and characterize coherent crosstalk between data qubits \cite{fangCrosstalkSuppressionIndividually2022}. In a typical quantum computing setting, this information can enable feedback-based error mitigation, either between experiments or computations, by updating calibration parameters, or via mid-circuit measurement and feedforward techniques.

In the covert quantum computing setting, Willie re-purposes spectator qubits to detect noise that deviates from the system's idle baseline. This allows him to infer whether other qubits on the QPU are undergoing active gate operations. Thus, Willie's detection of deviations in noise parameters allows detection of computation. The dominant source of such noise in superconducting architectures is crosstalk arising from two-qubit gate operations, which we characterize next.

\subsection{Crosstalk Noise}
\label{sec:noise-models}
Understanding the physical mechanisms that generate crosstalk is essential for characterizing Willie's detection problem. In superconducting qubit architectures, transmons are the predominant qubit technology used by IBM, Google, Rigetti, and IQM~\cite{kjaergaardSuperconductingQubitsCurrent2020,acharyaQuantumErrorCorrection2025,lanesFrameworkQuantumAdvantage2025,cauneDemonstratingRealtimeLowlatency2024, castelvecchiIBMReleasesFirstever2023}. These act as weakly anharmonic oscillators, such that while computation is based on two-level qubits, leakage into higher energy levels can occur. Single-qubit gates are driven by microwave pulses resonant with the $\ket{0} \leftrightarrow \ket{1}$ transition, and therefore frequency collisions between driven transitions and nearby qubit or coupler transitions must be minimized to suppress leakage and crosstalk.

Two-qubit gates are a dominant source of correlated crosstalk: they are slower than single-qubit gates and require activation of an interaction, such as through flux modulation, which increases susceptibility of nearby qubits to errors and crosstalk as well. One prominent two-qubit gate implementation in current superconducting hardware is the flux-tunable coupler architecture. Because our experiments in Section~\ref{sec:experiments} use IBM Heron and IQM Crystal-based processors, we focus on flux-tunable coupler implementations of the CZ gate. While we leave the full details of the physics underpinning tunable couplers to other works \cite{yanTunableCouplingScheme2018, sungRealizationHighFidelityCZ2021, marxerLongDistanceTransmonCoupler2023}, here we provide a brief description of their mechanisms and the crosstalk noise that motivates the nearest-neighbor model.

The interaction strength between the qubits is determined by both a direct coupling with strength $g_{12}$ and an indirect coupling via the coupler with strengths $g_{1c}$ and $g_{2c}$. The leading-order effective qubit–qubit coupling is:
\begin{align}
    \tilde{g} = g_{12} + \frac{g_{1c}g_{2c}}{2}\left(\frac{1}{\Delta_{1c}} + \frac{1}{\Delta_{2c}}\right), \label{eq:tunable-coupler-coupling-strength}
\end{align}
where $\Delta_{ic} = \omega_i - \omega_c$ is the detuning between qubit $i$ and the coupler. Tuning the coupler frequency $\omega_c$ via an external magnetic flux bias generates destructive interference between the two couplings, yielding $\tilde{g} \approx 0$, effectively decoupling the qubit pair at its idling point. The CZ gate is then applied by pulsing the coupler away from the idle point, activating an effective ZZ interaction between the two gate qubits. Up to single-qubit Z phases and a global phase, the interaction causes the $\ket{11}$ component to acquire an additional relative phase. Pulsing to generate a $\pi$ phase induces the CZ.

Even at the idling point, residual ZZ coupling may exist, not only between the intended gate pair, but also between gate qubits and nearest-neighbor spectators. During a two-qubit gate pulse, spectators can experience frequency shifts proportional to their ZZ coupling to the gate qubits. These shifts cause spectators to accumulate phase errors, providing Willie with a detectable coherent signature of Alice's gate operation. This nearest-neighbor mechanism is the primary topology-dependent crosstalk channel modeled in our square-root law for covert quantum computing. Non-nearest-neighbor couplings may also be present and, in theory, used by Willie for detection. However, in tunable-coupler designs, they are expected to be significantly weaker than nearest-neighbor couplings, with next-nearest-neighbor couplings reported at least two orders of magnitude smaller in IQM's architecture~\cite[Sec.~II]{marxerLongDistanceTransmonCoupler2023}. We therefore emphasize that this is not an exhaustive model of crosstalk or noise within these systems. Thus, our experiments test not only the predicted nearest-neighbor crosstalk detection, but also the validity of the locality assumption itself.

\subsection{Superconducting Qubit Topologies}
\begin{figure}[htbp]
    \centering
    \begin{minipage}[b]{0.48\linewidth}
        \centering
        \includegraphics[width=\linewidth,height=3.2cm,keepaspectratio]{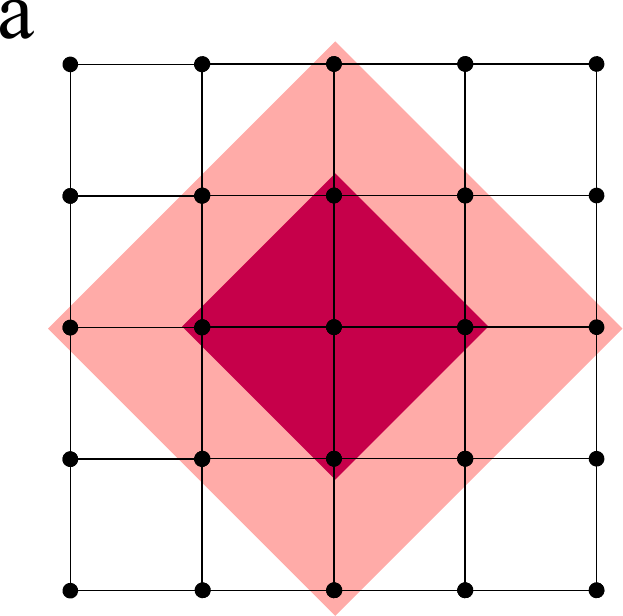}
    \end{minipage}\hfill
    \begin{minipage}[b]{0.48\linewidth}
        \centering
        \includegraphics[width=\linewidth,height=3.2cm,keepaspectratio]{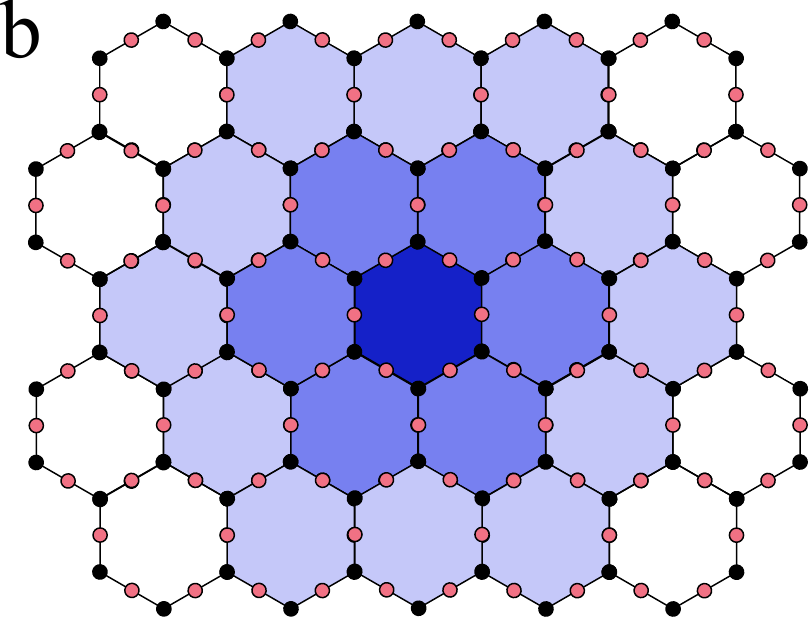}
    \end{minipage}
    \caption{\textbf{(a)}~The square lattice is shown with black vertices and edges. The shape that optimizes the vertex-isoperimetric inequality is a diamond: the dark red diamond encloses $5$ vertices, and the light red diamond encloses $13$. \textbf{(b)}~The hexagonal lattice is shown with black vertices and edges; the heavy-hex lattice is obtained by adding the red vertices. The shaded regions show the disk-construction vertex subsets of the hexagonal lattice for radii $r=0$, $1$, and $2$ (dark, medium, and light shading, respectively), where each layer subsumes all interior layers. In the heavy-hex disk construction, each outgoing edge and its associated red vertex are assigned to the preceding layer. }
    \label{fig:square-hex-topologies}
\end{figure}

Under the noise models described above, crosstalk affects the nearest-neighbor qubits only. It is typically several orders of magnitude weaker for non-nearest neighbors and is virtually undetectable. Thus, at least in theory, Willie can obtain information on whether Alice is computing \emph{only} from the boundary of her circuit. Therefore, Alice can prevent Willie from using any strategy to determine whether she is computing by leaving the border qubits idle. This results in maximum detection error probability $P^{\text{(e)}}_W=\frac{1}{2}$ for Willie, at the cost of idle-qubit overhead to Alice. We now study how superconducting quantum chip layouts affect this cost.

The predominant topologies for superconducting qubits are the square \cite{abdurakhimovTechnologyPerformanceBenchmarks2024} and the heavy-hex two-dimensional (2D) lattices shown in Fig.~\ref{fig:square-hex-topologies}. The heavy-hex lattice was pioneered by IBM to reduce frequency collisions and errors in a circuit by a slight decrease in connectivity~\cite{IBMQuantumHeavy}.
\emph{Isoperimetric inequalities} provide a tight lower bound on the boundary size for a given surface area.
Next, we explore their discrete versions and show that, for current 2D quantum-computer topologies, we need only $\mathcal{O}(\sqrt{n})$ qubits to isolate Alice's $n$-computational-qubit circuit from Willie.

\subsection{Isoperimetric Analysis}
The isoperimetric inequality involves closed curves in a two-dimensional plane, and states $L^2 \le 4\pi A$, where $L$ is the length of the closed curve and $A$ is the area it encloses. The isoperimetric inequality in the plane states that a circle maximizes the area-to-perimeter ratio. Here, we require a discrete version of the isoperimetric inequality for sets of vertices embedded on a graph or lattice.

We define a graph $G=(V(G), E(G))$ as the pair consisting of vertex set $V(G)$ and edge set $E(G)$. For a subset $S\subseteq V(G)$, the \emph{vertex boundary} $N(S)$ is the set of vertices not in $S$ that have an edge which connects it to a vertex in $S$: $N(S) \equiv \{u \in V(G) \setminus S \mid \{u, v\} \in E(G) \wedge v \in S\}$. The \emph{edge boundary} $\mathcal{E}(S)$ is the set of edges that connect a vertex in $S$ to a vertex not in $S$: $\mathcal{E}(S) \equiv \{\{u, v\} \in E(G) \mid u \in S \wedge v \in V(G) \setminus S\}$. The cardinality $|S|$ of the vertex set $S$ plays the role of area in the continuous case.

\subsubsection{Square Grid}
Denote $G^{\rm S}_\infty$ as the infinite square grid, then for any finite subset $S^{\rm S} \subset V(G^{\rm S}_\infty)$, the isoperimetric inequalities are ~\cite{harperOptimalAssignmentsNumbers1964, bollobasEdgeisoperimetricInequalitiesGrid1991a}:
\begin{align}
|\mathcal{E}(S^{\rm S})|&\ge 4\sqrt{|S^{\rm S}|}, \\
|N(S^{\rm S})|&\ge 2\sqrt{2|S^{\rm S}|}, \label{eq:square-isoperimetric}
\end{align}
with equality achieved by diamond-shaped (Fig.~\ref{fig:square-hex-topologies}(a)) and square-shaped finite graphs, for the vertex and edge boundary, respectively.

\subsubsection{Hexagonal Grid} Denote $G^{\rm H}_\infty$ as the infinite hexagonal grid. Then for any finite subset $S^{\rm H} \subset V(G^{\rm H}_\infty)$~\cite{grussienIsoperimetricInequalitiesHexagonal2012}:
\begin{align}
    |N(S^{\rm H})| &\geq \sqrt{6|S^{\rm H}|}, \label{eq:hex-isoperimetric-vertex}\\
    |\mathcal{E}(S^{\rm H})| &\geq \sqrt{6|S^{\rm H}|}. \label{eq:hex-isoperimetric-edge}
\end{align}
The hexagonal disk $G^{\rm H}_r$ with radius $r$ is a finite graph given by concentric layers of regular hexagons with a single hexagon having radius $r=0$. The shaded regions excluding the red vertices in Fig.~\ref{fig:square-hex-topologies}(b) demonstrate $G^{\rm H}_r$ for $r=0,1,2$ in dark, medium, and light blue. The hexagonal disk construction achieves equality in both \eqref{eq:hex-isoperimetric-vertex} and \eqref{eq:hex-isoperimetric-edge}.

\subsubsection{Heavy-Hex Grid}
We derive the isoperimetric inequalities for the heavy hexagonal grid denoted $G^{\rm HH}_\infty$, by first demonstrating a general result that relates the cardinality of the vertex and edge boundaries for a connected planar graph and its related subdivision graph generated by inserting a vertex on each edge.

\begin{lemma}\label{lem:generalize-heavy-isoperimetric}
Let $G^\prime$ be an arbitrary graph, and let $G$ be obtained from $G^\prime$ by subdividing every edge exactly once, inserting a degree-two vertex $v_{uw}$ for each edge $\{u,w\} \in E(G^\prime)$. Let $S_2 = \{v_{uw} : \{u,w\} \in E(G^\prime)\}$ be the set of inserted vertices such that $V(G) = V(G^\prime) \cup S_2$. Then for any $S^\prime \subseteq V(G^\prime)$ and any $S \subseteq V(G)$ with $S \cap V(G^\prime) = S^\prime$,
\begin{align}
    |N(S)| &\geq |N(S^\prime)|, \label{eq:subdivision-isoperimetric-vertex}\\
    |\mathcal{E}(S)| &\geq |\mathcal{E}(S^\prime)|, \label{eq:subdivision-isoperimetric-edge}
\end{align}
where $N(\cdot)$ and $\mathcal{E}(\cdot)$ are the vertex and edge boundaries in $G$ and $G^\prime$ respectively. Equality holds in \eqref{eq:subdivision-isoperimetric-vertex} when $S = S^\prime \cup \{v_{uw} \in S_2 : u \in S^\prime \text{ or } w \in S^\prime\}$, and in \eqref{eq:subdivision-isoperimetric-edge} when $S = S^\prime \cup \{v_{uw} \in S_2 : u \in S^\prime \text{ and } w \in S^\prime\}$.
\end{lemma}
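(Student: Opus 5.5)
The plan is to prove both inequalities by constructing explicit injections from the boundary in $G^\prime$ into the corresponding boundary in $G$. The only structural fact needed is that each edge $\{u,w\}\in E(G^\prime)$ becomes the length-two path $w - v_{uw} - u$ in $G$, and that distinct edges of $G^\prime$ give internally disjoint paths. Throughout, $S\cap V(G^\prime)=S^\prime$ is used in one way: an original vertex lies in $S$ exactly when it lies in $S^\prime$.

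\textbf{Vertex boundary.} For each $u\in N(S^\prime)$, fix one neighbor $w=w(u)\in S^\prime$. Then $u\notin S$ and $w\in S$.
\begin{itemize}
\item If $v_{uw}\in S$, then $u$ is adjacent in $G$ to a vertex of $S$, so $u\in N(S)$. Map $u\mapsto u$.
\item If $v_{uw}\notin S$, then $v_{uw}$ is adjacent to $w\in S$, so $v_{uw}\in N(S)$. Map $u\mapsto v_{uw}$.
\end{itemize}
Injectivity is the only point needing care. Images of the first type are distinct original vertices. Images of the second type are distinct because $v_{uw}$ determines the edge $\{u,w\}$, and $u$ is recovered as its unique endpoint outside $S^\prime$. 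The two types cannot collide, since one consists of original vertices and the other of inserted vertices. Hence $|N(S)|\ge|N(S^\prime)|$.

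\textbf{Edge boundary.} Take an edge $\{u,w\}\in\mathcal{E}(S^\prime)$ with $w\in S^\prime$ and $u\notin S^\prime$.
\begin{itemize}
\item If $v_{uw}\notin S$, then $\{w,v_{uw}\}\in\mathcal{E}(S)$.
\item If $v_{uw}\in S$, then $\{v_{uw},u\}\in\mathcal{E}(S)$.
\end{itemize}
Every edge of $G$ lies on exactly one subdivided path, so this map is injective. Hence $|\mathcal{E}(S)|\ge|\mathcal{E}(S^\prime)|$.

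\textbf{Equality cases.} Here I will show that each map above is also surjective.

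For the vertex case, take $S=S^\prime\cup\{v_{uw}: u\in S^\prime \text{ or } w\in S^\prime\}$. Any inserted vertex not in $S$ has both endpoints outside $S^\prime$, so it has no neighbor in $S$ and is not in $N(S)$. Therefore $N(S)$ contains only original vertices $u\notin S^\prime$. Such a $u$ has a neighbor in $S$, which must be some $v_{uw}$, and $v_{uw}\in S$ forces $w\in S^\prime$. So $N(S)\subseteq N(S^\prime)$, and combined with the inequality this gives equality.

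For the edge case, take $S=S^\prime\cup\{v_{uw}: u,w\in S^\prime\}$. On a path whose endpoints are both in $S^\prime$, every vertex is in $S$, so it contributes no boundary edges. On a path whose endpoints are both outside $S^\prime$, no vertex is in $S$, so again there are no boundary edges. On a crossing path, only the edge $\{w,v_{uw}\}$ is a boundary edge. Thus $|\mathcal{E}(S)|=|\mathcal{E}(S^\prime)|$.

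The main obstacle is the injectivity in the vertex case. It requires committing to a single choice $w(u)$ for each $u$, and then checking that inserted-vertex images cannot coincide with each other or with original-vertex images. The rest is bookkeeping.
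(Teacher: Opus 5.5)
Your proposal is correct and matches the paper's proof: it uses the same injection $u\mapsto u$ or $u\mapsto v_{u w(u)}$ (depending on whether $v_{u w(u)}\in S$) and the same per-path edge injection. The only difference is in the vertex equality case, where you prove $N(S)\subseteq N(S')$ and combine it with the inequality, while the paper proves both inclusions. Your path-by-path count for the edge equality case is somewhat more explicit than the paper's one-line justification.
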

The proof is in Appendix~\ref{app:subdivision-lemma-proof}.

Combining Lemma~\ref{lem:generalize-heavy-isoperimetric} with the hexagonal-grid isoperimetric inequality \eqref{eq:hex-isoperimetric-vertex} yields:
\begin{lemma}\label{lem:heavy-hex-vertex-isoperimetric}
For any finite subset $S^{\rm HH} \subset V(G^{\rm HH}_\infty)$,
\begin{align}
    |N(S^{\text{HH}})| \geq \frac{-9 + \sqrt{81 + 60|S^{\text{HH}}|}}{5}. \label{eq:heavy-hex-vertex-isoperimetric}
\end{align}
\end{lemma}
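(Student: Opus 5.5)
Write $m=|S^{\rm HH}|$ and $b=|N(S^{\rm HH})|$. Solving the quadratic shows that \eqref{eq:heavy-hex-vertex-isoperimetric} is equivalent to
\begin{align}
m \le \frac{5b^2+18b}{12}. \notag
\end{align}
So the plan is to bound the number of heavy-hex vertices in $S^{\rm HH}$ by a quadratic in its vertex boundary. I regard $G^{\rm HH}_\infty$ as the subdivision of $G^{\rm H}_\infty$ and apply Lemma~\ref{lem:generalize-heavy-isoperimetric}. Let $S^\prime=S^{\rm HH}\cap V(G^{\rm H}_\infty)$ be the original (degree-three) vertices of $S^{\rm HH}$. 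The remaining vertices of $S^{\rm HH}$ are subdivision vertices $v_{uw}$, which I will count separately.

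The first step is to control $|S^\prime|$. Lemma~\ref{lem:generalize-heavy-isoperimetric} gives $b\ge |N(S^\prime)|$. Then \eqref{eq:hex-isoperimetric-vertex} gives $|N(S^\prime)|\ge\sqrt{6|S^\prime|}$, so $|S^\prime|\le b^2/6$.

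The second step is to count the subdivision vertices in $S^{\rm HH}$, splitting them by how many endpoints lie in $S^\prime$:
\begin{itemize}
\item \emph{Both endpoints in $S^\prime$.} These correspond to internal edges of $S^\prime$ in the hexagonal lattice. By 3-regularity there are $(3|S^\prime|-|\mathcal{E}(S^\prime)|)/2$ of them.
\item \emph{Exactly one endpoint in $S^\prime$.} These correspond to edges in $\mathcal{E}(S^\prime)$. If such a $v_{uw}$ is in $S^{\rm HH}$, its outer endpoint $w$ lies in $N(S^{\rm HH})$. If it is not in $S^{\rm HH}$, then $v_{uw}$ itself lies in $N(S^{\rm HH})$.
\item \emph{No endpoint in $S^\prime$.} Both endpoints of such a vertex lie in $N(S^{\rm HH})$. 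Including it only enlarges the boundary relative to its contribution to $m$, so the configuration can be improved by dropping it.
\end{itemize}
From this accounting I expect the worst case to be the equality configuration of Lemma~\ref{lem:generalize-heavy-isoperimetric}, i.e.\ $S^{\rm HH}=S^\prime\cup\{v_{uw}: u\in S^\prime \text{ or } w\in S^\prime\}$, which is the disk construction of Fig.~\ref{fig:square-hex-topologies}(b). In that configuration $m=\tfrac{5}{2}|S^\prime|+\tfrac12|\mathcal{E}(S^\prime)|$, and the boundary consists of the outer endpoints of the edges in $\mathcal{E}(S^\prime)$.

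The third step is to relate $|\mathcal{E}(S^\prime)|$ to $b$ with a linear bound of the form $|\mathcal{E}(S^\prime)|\le b+3\cdot(\text{const})$. Here I will use the fact that each outer hexagonal vertex is adjacent to at most three vertices of $S^\prime$, refined by planarity: along a convex boundary, only corner vertices of the hexagonal disk absorb two boundary edges. Combining this with $|S^\prime|\le b^2/6$ should give $m\le \tfrac{5}{12}b^2+\tfrac{3}{2}b$, which is exactly $(5b^2+18b)/12$. The final step is to solve the quadratic inequality for $b$, which yields \eqref{eq:heavy-hex-vertex-isoperimetric}.

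The main obstacle is the third step: the crude degree bound gives a linear coefficient of $3$ rather than $\tfrac32$. If the corner-counting argument does not work directly, I would first verify that the bound is tight on the heavy-hex disks $G^{\rm HH}_r$ by computing $m$ and $b$ as functions of $r$. I would then argue that an optimal $S^{\rm HH}$ may be assumed to be of this form, using the same compression or convexity argument used for the hexagonal-lattice inequality in~\cite{grussienIsoperimetricInequalitiesHexagonal2012}.
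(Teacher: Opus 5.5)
\textbf{There is a genuine gap.} Steps 2 and 3 carry the whole proof, and neither is actually carried out.

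The reduction to $S'\cup\{v_{uw}: u\in S' \text{ or } w\in S'\}$ is only ``expected.'' Lemma~\ref{lem:generalize-heavy-isoperimetric} says this set minimizes $b$ among sets with the given $S'$. It does not say it maximizes $m$ for a given $b$. Your third bullet does not close this: deleting a subdivision vertex with no endpoint in $S'$ lowers $m$ by one but can leave $b$ unchanged, so a bound for the smaller set does not transfer back.

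Step 3 has a worse problem. The target $|\mathcal{E}(S')|\le b+O(1)$ is false for general finite $S'$. Every single-vertex hole (a vertex outside $S'$ with all three neighbors in $S'$) adds one boundary vertex but three boundary edges, so the excess grows without bound. Corner counting would need a convexity or compression reduction that you have not supplied. Your fallback fails too. On the heavy-hex disks, $b=6(r+1)$ and $m=15(r+1)^2+3(r+1)$, which is strictly below $(5b^2+18b)/12=15(r+1)^2+9(r+1)$. So the inequality is not tight there, and no extremal-shape argument is called for.

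\textbf{The obstacle you flag is not real.} In your own formula $m=\tfrac52|S'|+\tfrac12|\mathcal{E}(S')|$, the crude degree bound $|\mathcal{E}(S')|\le 3|N(S')|$ enters with the factor $\tfrac12$ and yields exactly $\tfrac32 b$.

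The paper sidesteps both the case split and the reduction. Each endpoint of a subdivision vertex $v_{uw}\in S^{\rm HH}$ is either in $S^{\rm HH}$ (hence in $S'$), or outside it and adjacent to $v_{uw}$ (hence in $N(S^{\rm HH})$). So $v_{uw}$ corresponds to an edge of $G^{\rm H}_\infty$ with both endpoints in $S'\cup N(S^{\rm HH})$. By $3$-regularity there are at most $\tfrac32(|S'|+b)$ such edges. Hence $m\le\tfrac52|S'|+\tfrac32 b\le\tfrac{5}{12}b^2+\tfrac32 b$ for every finite $S^{\rm HH}$, which is your target.

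Your decomposition can also be closed directly. Let $A$ and $C$ be the subdivision vertices in $S^{\rm HH}$ with exactly one, respectively no, endpoint in $S'$. Let $B$ be the edges of $\mathcal{E}(S')$ whose subdivision vertex is not in $S^{\rm HH}$. Then $m\le\tfrac52|S'|+\tfrac12|A|-\tfrac12|B|+|C|$. The subdivision vertices of the edges in $B$, and the endpoints outside $S'$ of the edges in $A\cup C$, are disjoint parts of $N(S^{\rm HH})$. Each such endpoint absorbs at most three of these edges, which gives $b\ge|B|+\tfrac13(|A|+2|C|)$. Combining the two, $m\le\tfrac52|S'|+\tfrac32 b-2|B|$.
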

\begin{proof}
Let $S^{\rm H} = S^{\rm HH} \cap V(G^{\rm H}_\infty)$ denote the hexagonal-grid vertices in $S^{\rm HH}$. Lemma~\ref{lem:generalize-heavy-isoperimetric} together with \eqref{eq:hex-isoperimetric-vertex} yields
\begin{align}
    |N(S^{\rm HH})| \geq |N(S^{\rm H})| \geq \sqrt{6|S^{\rm H}|}. \label{eq:hh-proof-chain}
\end{align}
Since $G^{\rm H}_\infty$ is $3$-regular, each subdivision vertex in $S^{\rm HH}$ corresponds to an edge of $G^{\rm H}_\infty$ whose vertices both lie in $S^{\rm H} \cup N(S^{\rm HH})$, thus,
\begin{align}
    |S^{\rm HH}| &= |S^{\rm H}| + |S^{\rm HH} \cap S_2| \\
     &\leq |S^{\rm H}| + \frac{3\left(|S^{\rm H}| + |N(S^{\rm HH})|\right)}{2} \\
     &= \tfrac{5}{2}|S^{\rm H}| + \tfrac{3}{2}|N(S^{\rm HH})|.
\end{align}
Substituting the solution for $|S^{\rm H}|$ in \eqref{eq:hh-proof-chain} and taking the positive root yields \eqref{eq:heavy-hex-vertex-isoperimetric}.
\end{proof}

\begin{remark}
    Lemma~\ref{lem:generalize-heavy-isoperimetric} is not specific to the heavy-hex grid. In particular, it yields isoperimetric bounds on the sizes of vertex and edge boundaries for any once-subdivided graph $G$ to those on its non-subdivided base, $G^\prime$. 
    For example, applying the lemma to the square grid yields an analogous bound for the heavy-square lattice, which is constructed by inserting a vertex on every edge of the square lattice.
\end{remark}

\subsection{Square-Root Cost Bound for Covert Quantum Computing}
The isoperimetric inequalities in~\eqref{eq:square-isoperimetric}, \eqref{eq:hex-isoperimetric-vertex} and \eqref{eq:heavy-hex-vertex-isoperimetric} establish that any finite vertex set $S$ on an infinite square, hexagonal, or heavy-hex lattice satisfies $|N(S)| \geq c\sqrt{|S|}$ for a lattice-dependent constant $c$. The shapes of optimal sets achieving equality are diamond for the square grid~\cite{harperOptimalAssignmentsNumbers1964} and hexagonal disk for the hexagonal grid. For the heavy-hex grid, the optimal shape is an open problem, but we conjecture it is the hexagonal-disk construction with all subdivision vertices included. From Lemma~\ref{lem:heavy-hex-vertex-isoperimetric}, the leading-order scaling is $\sqrt{12|S|/5}$. These results establish that compact placements achieving $|N(S)| = \mathcal{O}(\sqrt{|S|})$ exist. Combined with the nearest-neighbor noise models from Section~\ref{sec:noise-models}, this yields a square-root bound on the number of additional qubits Alice must use to remain covert.

\begin{corollary}\label{cor:square-root-bound-for-covert-computing}
When unintended crosstalk noise affects only nearest-neighbor qubits, any circuit on $n$ qubits embedded in a sufficiently large $N$-qubit QPU with a square, hexagonal, or heavy-hex topology admits a placement that exposes at most $\mathcal{O}(\sqrt{n})$ spectator qubits to detectable crosstalk. Therefore, at most $\mathcal{O}(\sqrt{n})$ additional qubits must be included in the circuit as idling qubits to ensure covertness.
\end{corollary}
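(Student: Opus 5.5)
The corollary is an \emph{upper} bound: it needs an explicit placement whose vertex boundary is $\mathcal{O}(\sqrt{n})$. The isoperimetric inequalities \eqref{eq:square-isoperimetric}, \eqref{eq:hex-isoperimetric-vertex} and \eqref{eq:heavy-hex-vertex-isoperimetric} are lower bounds, so they only show that $\sqrt{n}$ cannot be beaten. The plan is therefore constructive. For each lattice, fix a nested family of compact shapes $S_1\subset S_2\subset\cdots$ whose sizes grow quadratically and whose boundaries grow linearly in a radius parameter. Given $n$, pick the smallest member with $|S_r|\ge n$, and place Alice's circuit (via a qubit mapping) inside $S_r$. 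The surplus vertices of $S_r$ simply idle. Finally, count $|N(S_r)|$.

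\textbf{Square lattice.} Use the diamonds of Fig.~\ref{fig:square-hex-topologies}(a), $D_r=\{(x,y):|x|+|y|\le r\}$. These satisfy $|D_r|=2r^2+2r+1$ and $|N(D_r)|=4(r+1)$. Choosing the minimal $r$ with $|D_r|\ge n$ gives $r\le\sqrt{n/2}+1$. Hence $|N(D_r)|=\mathcal{O}(\sqrt{n})$, and the number of unused interior vertices $|D_r|-n\le |D_r|-|D_{r-1}|=4r$ is also $\mathcal{O}(\sqrt{n})$.

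\textbf{Hexagonal lattice.} Use the hexagonal disks $G^{\rm H}_r$. A direct layer count gives $|G^{\rm H}_r|=6(r+1)^2$ and a boundary linear in $r$; consistently, the text states that these disks attain equality in \eqref{eq:hex-isoperimetric-vertex}.

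\textbf{Heavy-hex lattice.} Take $S=G^{\rm H}_r$ together with all subdivision vertices incident to it, which is the equality case of Lemma~\ref{lem:generalize-heavy-isoperimetric}. Then $|N(S)|=|N(G^{\rm H}_r)|=\mathcal{O}(r)$. The set still has $\Theta(r^2)$ vertices, because each subdivision vertex is counted at most once per edge and the edge count is linear in the vertex count for a 3-regular lattice.

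In all three cases the consecutive-size gaps $|S_r|-|S_{r-1}|$ are $\mathcal{O}(r)$. So rounding $n$ up to the next member of the family costs only $\mathcal{O}(\sqrt{n})$ idle interior qubits.

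\textbf{Covertness.} The set of qubits Willie can use consists of everything outside $S_r\cup N(S_r)$. Alice reserves $N(S_r)$ as idle buffer qubits. Under the nearest-neighbor crosstalk model of Section~\ref{sec:noise-models}, only $N(S_r)$ experiences any perturbation from gates acting within $S_r$. Since none of those qubits belongs to Willie, every message his co-strategy receives is identical under $H_0$ and $H_1$. Thus $\|S_A^{(1)}(\tilde{\mathcal{I}})-S_A^{(0)}\|_{\diamond r}=0$, and Criterion~\ref{crit:covert-strategy-computation} holds for every $\delta>0$. The total overhead is $|N(S_r)|+(|S_r|-n)=\mathcal{O}(\sqrt{n})$.

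\textbf{Main obstacle.} The delicate step is the embedding claim that \emph{any} $n$-qubit circuit can be placed in $S_r$. The circuit's interaction graph need not embed in the lattice. I would handle this by treating Alice's circuit abstractly as using the $n$ physical qubits of $S_r$, with routing (SWAPs) confined to $S_r$. Routing stays inside the region, so the boundary is unchanged and reliability is a compilation concern separate from covertness. The second technical point is the hypothesis that the $N$-qubit QPU is ``sufficiently large.'' This should mean that $S_r\cup N(S_r)$ fits in the finite device away from its edges, so the infinite-lattice counts apply. If the region touches the chip's physical edge, the boundary only shrinks, so the bound still holds.
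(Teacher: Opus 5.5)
Your proposal is correct and follows the same constructive route as the paper. Both place the circuit in the smallest member of a compact nested family of lattice regions containing $n$ vertices, idle the surplus, and bound surplus plus vertex boundary by $\mathcal{O}(\sqrt{n})$, with covertness following because none of Willie's qubits lies in the exposed neighborhood. The differences are minor. On the square lattice the paper uses a $\lceil\sqrt{n}\rceil\times\lceil\sqrt{n}\rceil$ block (boundary at most $4\lceil\sqrt{n}\rceil$, surplus below $2\sqrt{n}$) rather than diamonds. For the heavy-hex case, your explicit use of the equality case of Lemma~\ref{lem:generalize-heavy-isoperimetric} supplies the needed upper bound on the boundary, which the paper's one-line reference to Lemma~\ref{lem:heavy-hex-vertex-isoperimetric} (itself a lower bound) leaves implicit.
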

\begin{proof}
For the square grid, let $k = \lceil\sqrt{n}\rceil$ and place the circuit on a $k \times k$ subgrid, selecting $n$ of the $k^2$ qubits for computation and idling the remaining $k^2 - n \leq 2\sqrt{n}$. The vertex boundary of the $k \times k$ block satisfies $|N(S)| \leq 4k = 4\lceil\sqrt{n}\rceil$. Hence, the total number of additional qubits Alice must include as idling qubits is at most $(k^2 - n) + |N(S)| \leq 2\sqrt{n} + 4\lceil\sqrt{n}\rceil = \mathcal{O}(\sqrt{n})$. The analogous constructions on hexagonal and heavy-hex grids use hexagonal disks of the smallest radius $r$ with vertex cardinality larger than $n$ and $\mathcal{O}(\sqrt{n})$ scaling given by Lemma~\ref{lem:heavy-hex-vertex-isoperimetric}.%
\end{proof}

This corollary states that, provided Alice's computation affects only nearest-neighbor qubits, she may remain covert against the most capable Willie described in Section~\ref{sec:covert-quantum-computing-theory}, and not just the one limited to NISQ hardware.
This is because her computational qubits are entirely decoupled from Willie under the given noise model. Furthermore, we expect our approach to apply beyond the planar 2D topologies to future 3D quantum processors. However, we anticipate $\mathcal{O}(n^{2/3})$ rather than $\mathcal{O}(\sqrt{n})$ scaling, matching that of the geometric constraint on 3D-network transport capacity \cite{franceschettiCapacityWirelessNetworks2009a} and isoperimetric inequalities for the cubic lattice \cite{bollobasEdgeisoperimetricInequalitiesGrid1991a}.

We now turn to experimentally testing Corollary~\ref{cor:square-root-bound-for-covert-computing}.

\section{Experimental Validation}
\label{sec:experiments}

Corollary~\ref{cor:square-root-bound-for-covert-computing} states that Alice remains covert against an arbitrarily powerful Willie by mapping her $n$ computational qubits to a compact region of a lattice and idling an additional $\mathcal{O}(\sqrt{n})$ qubits along its vertex boundary, provided the crosstalk is confined to nearest neighbors. We experimentally test the scaling bound and the crosstalk locality premise on IQM's \emph{Emerald} and IBM's \emph{ibm\_fez} superconducting QPUs with square and heavy-hex topologies, respectively.  

We choose Willie's strategy to reflect the capabilities of current NISQ hardware. We perform Ramsey interferometry on each spectator qubit, since this protocol is well-suited to detecting frequency shifts induced by residual ZZ coupling from CZ gates \cite{krantzQuantumEngineersGuide2019, wagnerOptimizedNoiseSuppression2025}. Beyond sensitivity, Ramsey circuits minimize the impact of NISQ limitations: Willie only requires three single-qubit gates and a single measurement per spectator over the course of Alice's computation. We further limit Willie to no coherent quantum memory, and while strategies more sensitive to the noise model may exist, we defer their exploration to future work.

\subsection{Ramsey Protocol and Detection Strategy}
\label{sec:ramsey-protocol-and-detection-strategy}
Ramsey experiments exploit the fact that, under the Schr\"{o}dinger equation, the relative phase between $\ket{0}$ and $\ket{1}$ accumulates at a rate proportional to the qubit's transition frequency. The protocol prepares a spectator qubit in a superposition via a $\sqrt{X}$ gate, allows free evolution for a delay $\tau$, applies an $R_Z(2\pi f_{\rm osc}\tau)$ rotation where $f_{\rm osc}$ is a user-defined detuning, and applies a second $\sqrt{X}$ gate converting the relative phase into a measurable population difference. The generated circuit is on qubit $q_0$ in Fig.~\ref{fig:ramsey-circuits}. Sweeping $\tau$ produces oscillations in the measurement probability at the detuning frequency $\Delta = \omega_q - \omega_{\rm ref}$:
\begin{align}
    P(\ket{0}) = \frac{1}{2}\left(1 + e^{-\tau\Gamma}\cos(\Delta\tau)\right), \label{eq:ramsey-exp-equation}
\end{align}
where $\Gamma = 1/T_2^*$ is the dephasing rate and $T_2^*$ is the inhomogeneous dephasing time. Each experiment sweeps over $\tau$ at $39$ equally-spaced points from 0 to 11~$\mu$s with $f_{\rm osc} = 300$~kHz and 1024 shots per point. The detuning frequency $\Delta$ is then determined by fitting \eqref{eq:ramsey-exp-equation} using the measurement outcomes.

To detect crosstalk, we modify the baseline circuit by applying CZ gates to a neighboring qubit pair during the delay ($q_1$ and $q_2$ in Fig.~\ref{fig:ramsey-circuits}). The number of CZ gates is set by the ratio of the delay to the CZ gate time. Residual ZZ coupling between the gate qubits and the spectator causes the spectator to accumulate additional phase, appearing as a frequency offset in the Ramsey fit relative to the baseline run. The frequency shift for a given spectator is the difference between the detuning frequency during the baseline experiment and that during CZ application. An example of these results is shown in Fig.~\ref{fig:ramsey-experiment-1} for qubit~35 in Experiment~1 on \emph{Emerald}: blue circles and solid line are the baseline, red squares and dashed line show the crosstalk. The observed shift is $91.30$~kHz.

\begin{figure}
    \centering
    \includegraphics[width=\linewidth]{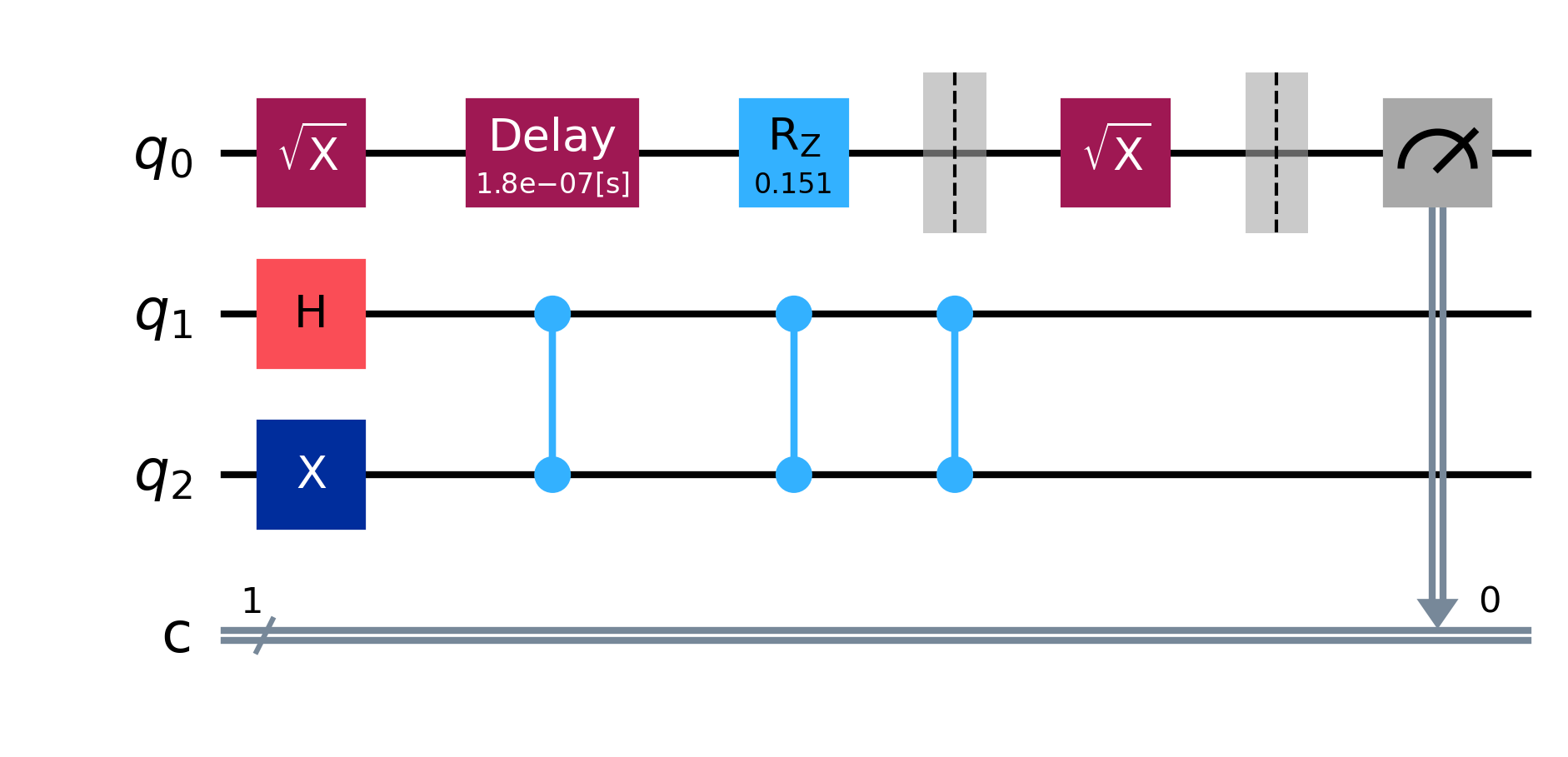}
    \caption{Experiment circuits. The gate sequence on qubit $q_0$ represents the Ramsey circuit performed on all spectator qubits. A $\sqrt{X}$ gate prepares a superposition, followed by a variable delay and $R_Z(2\pi f_{\rm osc}\tau)$ rotation. A second $\sqrt{X}$ gate is applied before measurement. Qubits $q_1$ and $q_2$ represent Alice's computation, applying repeated CZ gates during the Ramsey delay. Baseline results are gathered by letting qubits $q_1$ and $q_2$ idle.}
    \label{fig:ramsey-circuits}
\end{figure}

\begin{figure}
    \centering
    \includegraphics[width=\linewidth]{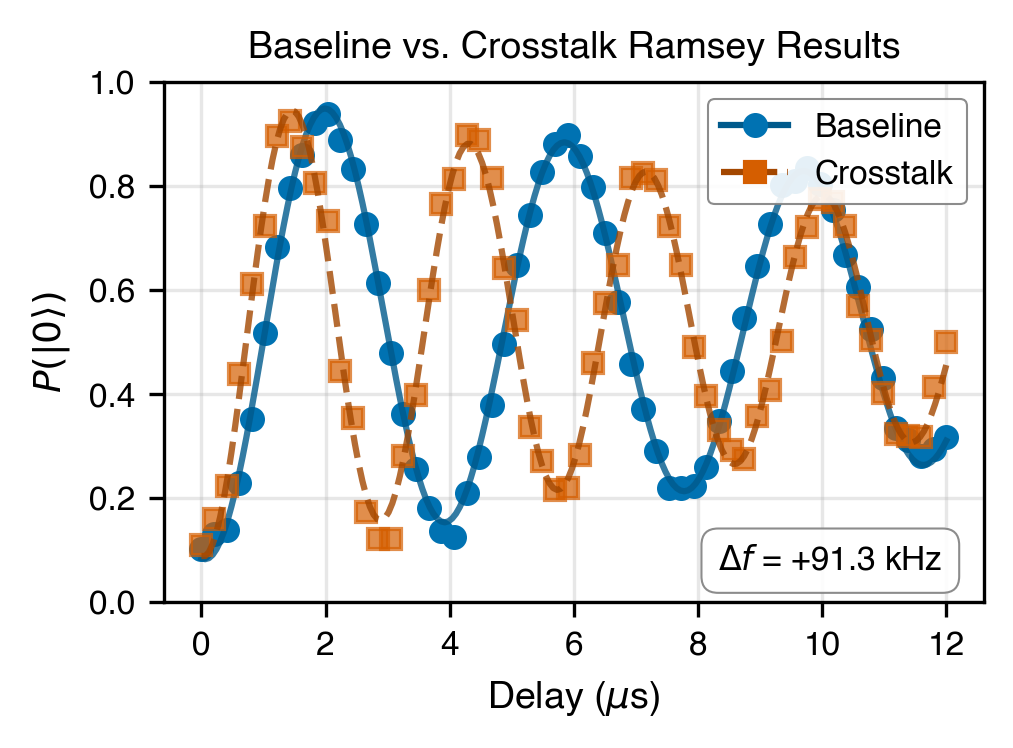}
    \caption[Ramsey experiment results for qubit 35 in \emph{Emerald} Experiment 1.]{Ramsey experiment results for qubit 35 in \emph{Emerald} Exp.~1. The horizontal axis corresponds to the delay in the Ramsey experiment circuit, while the vertical axis gives the probability of measuring $\ket{0}$, from \eqref{eq:ramsey-exp-equation}. A detectable frequency shift of $91.30$~kHz between the baseline and crosstalk results is measured.}
    \label{fig:ramsey-experiment-1}
\end{figure}

A qubit cannot simultaneously participate in more than one CZ gate. Therefore, for each experiment, Alice's CZ edges are divided into independent sets indicated by orange, blue, green, and maroon edges as seen in Fig.~\ref{fig:exp-topologies}. Each edge set is thus run independently, and detection results are aggregated across edge sets within an experiment.

A spectator qubit belonging to Willie is classified as a nearest neighbor (NN) if it is adjacent to any active CZ pair in an edge set, and non-NN otherwise. Therefore, while a qubit may be counted as a NN in one edge set, it may be possible to be counted as a non-NN in another.

Each experimental configuration is first run with two baseline circuits across all of Willie's qubits. A frequency shift $\Delta f$ is classified as detectable if $|\Delta f|$ exceeds twice the standard deviation of baseline-to-baseline frequency differences across all configurations on that device. This threshold is $12.85$~kHz for the \emph{Emerald} machine and $3.15$~kHz for \emph{ibm\_fez}.

We run experiments on two devices: IQM's 54-qubit \emph{Emerald}~\cite{abdurakhimovTechnologyPerformanceBenchmarks2024}, which uses a square lattice, and IBM's 156-qubit \emph{ibm\_fez}, which uses a heavy-hex lattice~\cite{IBMQuantumHeavy}. Both employ tunable-coupler architectures with CZ as the native two-qubit gate. Each experiment uses 1024 shots, $\tau$ sampled at 39 equally-spaced points from 0 to 11~$\mu$s, and $f_{\rm osc} = 300$~kHz. We construct and submit circuits to both backends using IBM Qiskit, and perform Ramsey fits using SciPy's \texttt{curve\_fit}.

A Ramsey fit on a single baseline measurement or single qubit, edge-set pair is rejected if \texttt{curve\_fit} fails to converge. For the \emph{Emerald} experiments, this corresponded to 3.9\% of fits, while only 2.7\% of fits failed to converge across all experiments on \emph{ibm\_fez}. Qubit 45 on \emph{Emerald} was excluded from results across all baseline and edge-set experiments, as it consistently fit as pure decoherence regardless of the surrounding gate activity.

\subsection{Experiment Configurations}
\label{sec:experiment-configs}
Table~\ref{tab:experiment-configs} describes the experiment configurations for each device. Figure~\ref{fig:exp-topologies} shows the physical mapping of the experiments to each QPU. Alice's CZ qubits are shown in red, and the edge sets where CZ gates are applied are indicated by like-colored edges; Willie's qubits are in black.

We present five experiments on \emph{Emerald} with expanding square grids of Alice's computation ($n = 2, 4, 9, 16, 25$). Square grids are used rather than the vertex-isoperimetrically optimal diamond shape because the diamond has a larger spatial footprint, e.g., for $n=25$, the diamond requires a $7\times 7$ bounding box versus $5\times 5$ for the square placement, which would reach the boundaries of the 54-qubit \emph{Emerald} layout with fewer data points. Nevertheless, both shapes yield $\mathcal{O}(\sqrt{n})$ vertex-boundary scaling.

For \emph{ibm\_fez}, we present four experiments with $n = 2, 12, 54, 108$. Experiments~2--4 correspond to heavy-hex disk constructions of radii $r = 1, 2, 3$, shown in Fig.~\ref{fig:square-hex-topologies}(b). We omit the outgoing edge and red vertex from each construction for compactness on the device, as we did with the square-over-diamond choice on \emph{Emerald}. Intermediate-$n$ runs ($n = 4, 6, 14, 21$) produce the same qualitative trends and are omitted for compactness.

\begin{table}[ht]
\centering
\caption{Experiment configurations on both devices.}
\label{tab:experiment-configs}
\begin{tabular}{cc|ccc|c}
\hline
\multirow{2}{*}{Device} & \multirow{2}{*}{Exp.} & \multicolumn{3}{c|}{Alice} & Willie \\
\cline{3-6}
& & \makecell{Num.\\qubits} & \makecell{Num.\\CZ edges} & \makecell{Edge sets\\(sizes)} & \makecell{Num.\\qubits} \\
\hline
\multirow{5}{*}{\emph{Emerald}}
 & 1 & 2   & 1   & 1               & 52 \\
 & 2 & 4   & 4   & 2, 2            & 50 \\
 & 3 & 9   & 12  & 4, 3, 4, 1      & 45 \\
 & 4 & 16  & 24  & 7, 6, 6, 5      & 38 \\
 & 5 & 25  & 40  & 10, 11, 9, 10   & 29 \\
\hline
\multirow{4}{*}{\emph{ibm\_fez}}
 & 1 & 2   & 1   & 1               & 147 \\
 & 2 & 12  & 12  & 6, 6            & 138 \\
 & 3 & 54  & 60  & 24, 22, 14      & 97 \\
 & 4 & 108 & 122 & 46, 46, 30      & 46 \\
\hline
\end{tabular}
\end{table}

\begin{figure*}[htbp]
    \centering
    \begin{subfigure}[b]{0.18\textwidth}
        \centering
        \includegraphics[width=\linewidth]{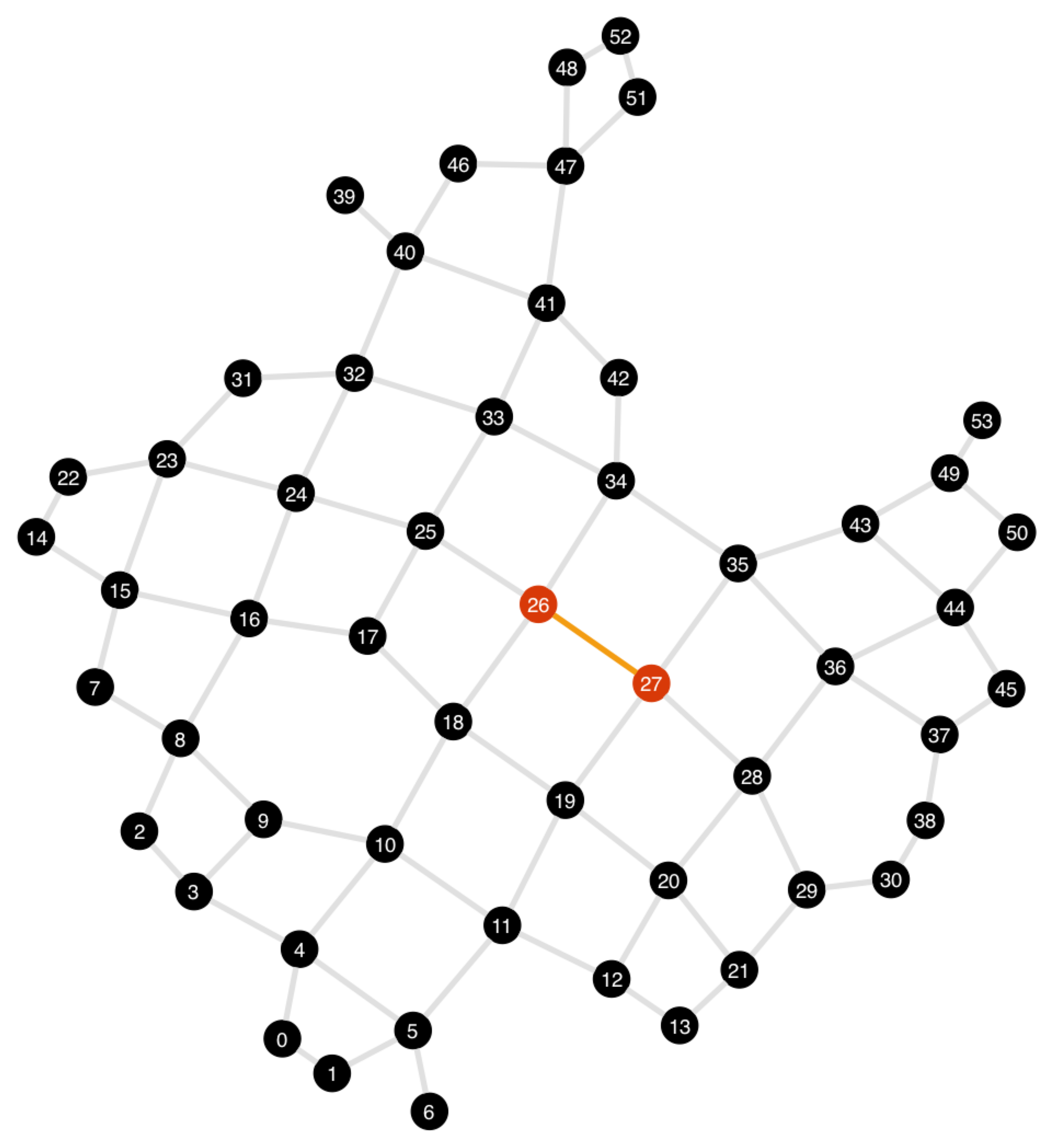}
        \caption{\emph{Emerald} Exp.~1.}
        \label{fig:exp-topology-iqm-1}
    \end{subfigure}\hfill
    \begin{subfigure}[b]{0.18\textwidth}
        \centering
        \includegraphics[width=\linewidth]{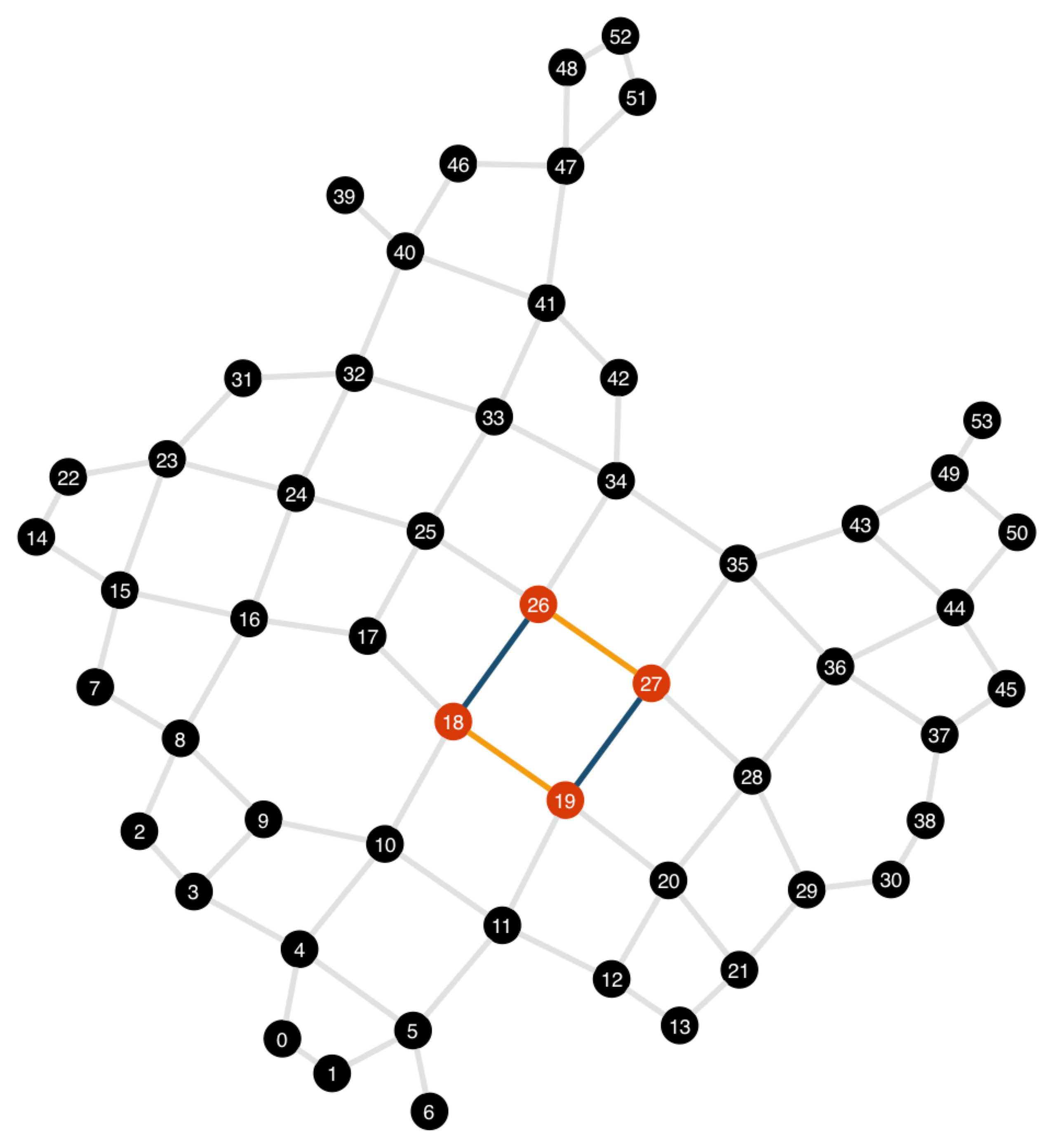}
        \caption{\emph{Emerald} Exp.~2.}
        \label{fig:exp-topology-iqm-2}
    \end{subfigure}\hfill
    \begin{subfigure}[b]{0.18\textwidth}
        \centering
        \includegraphics[width=\linewidth]{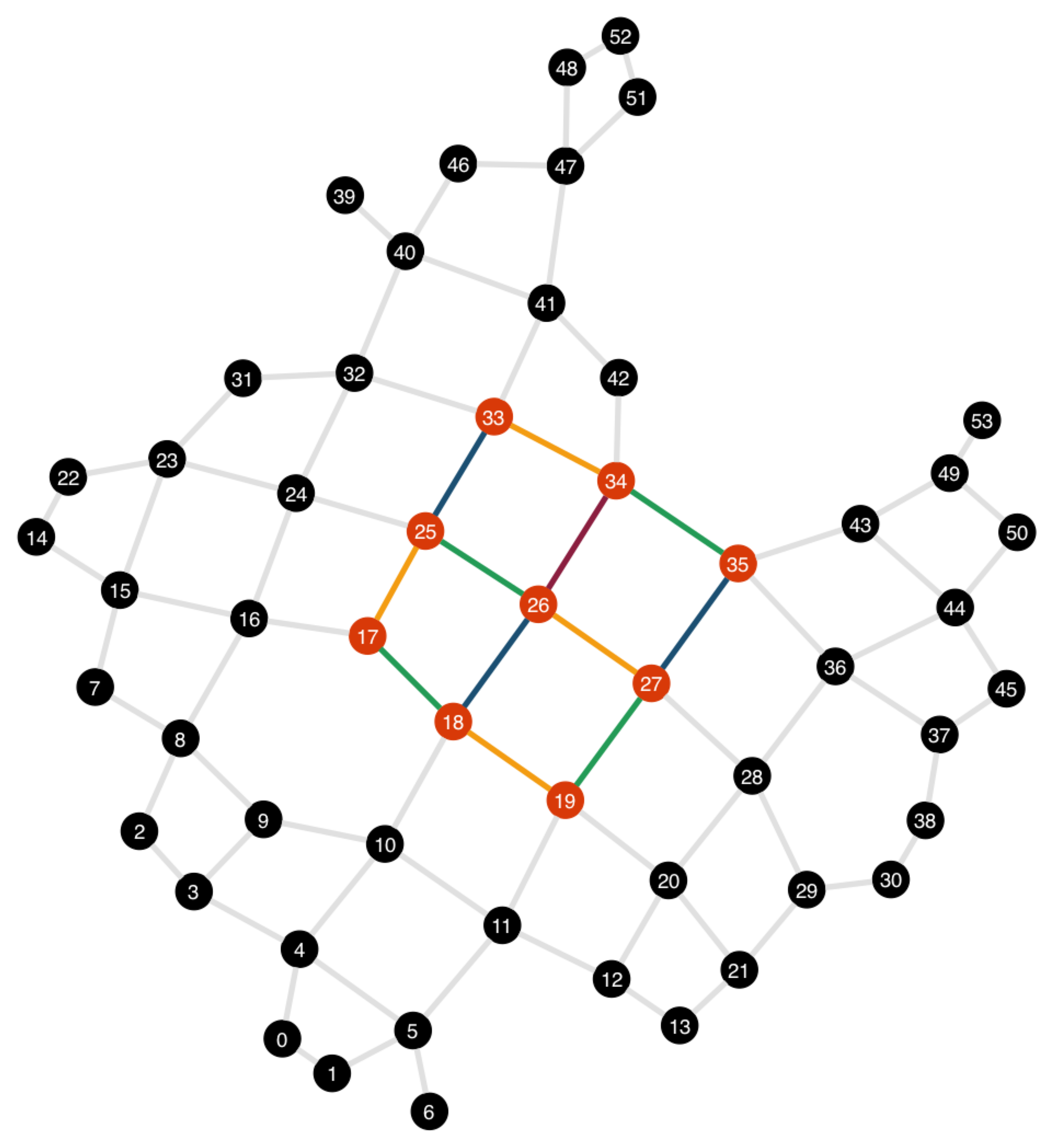}
        \caption{\emph{Emerald} Exp.~3.}
        \label{fig:exp-topology-iqm-3}
    \end{subfigure}\hfill
    \begin{subfigure}[b]{0.18\textwidth}
        \centering
        \includegraphics[width=\linewidth]{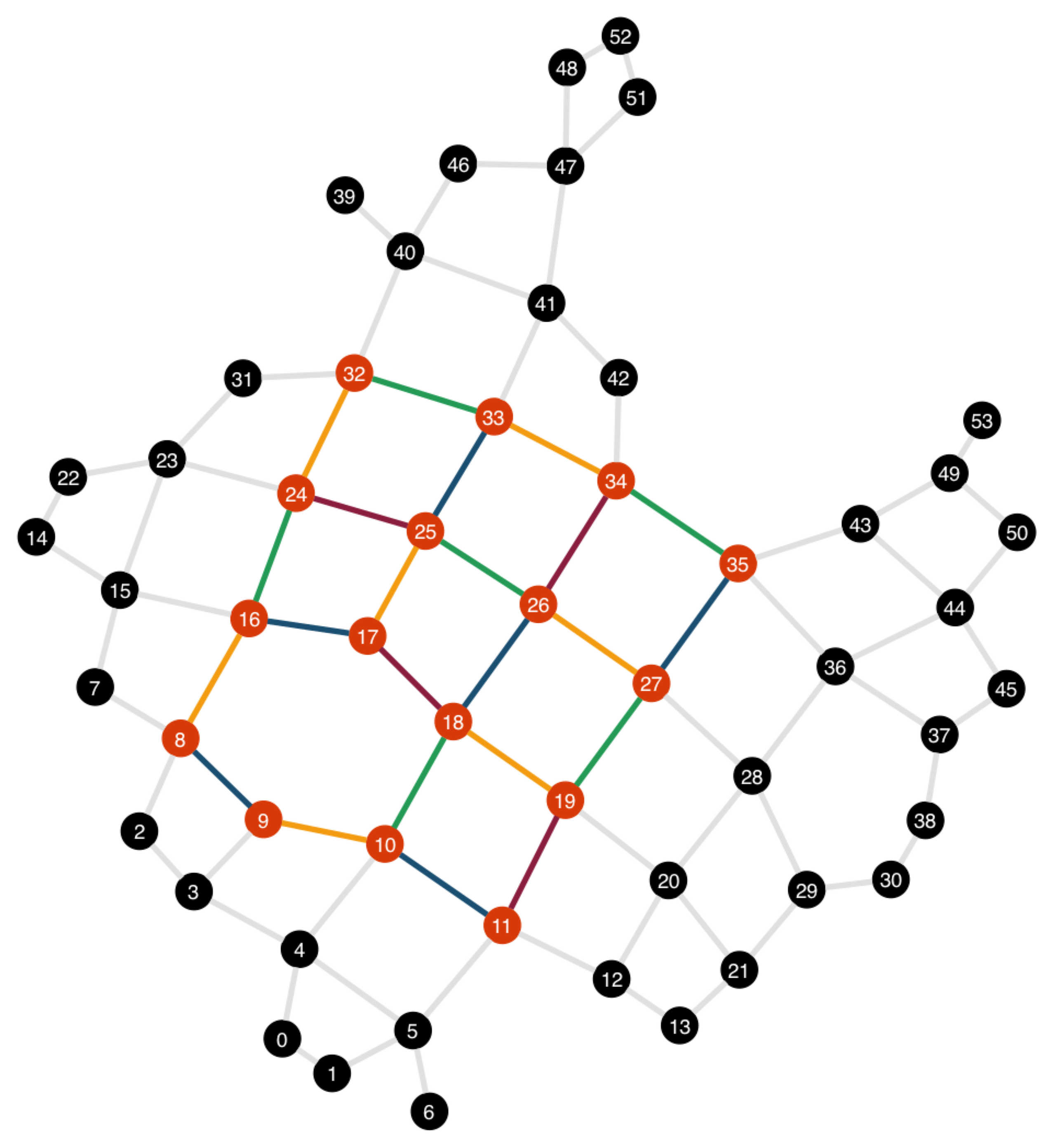}
        \caption{\emph{Emerald} Exp.~4.}
        \label{fig:exp-topology-iqm-4}
    \end{subfigure}\hfill
    \begin{subfigure}[b]{0.18\textwidth}
        \centering
        \includegraphics[width=\linewidth]{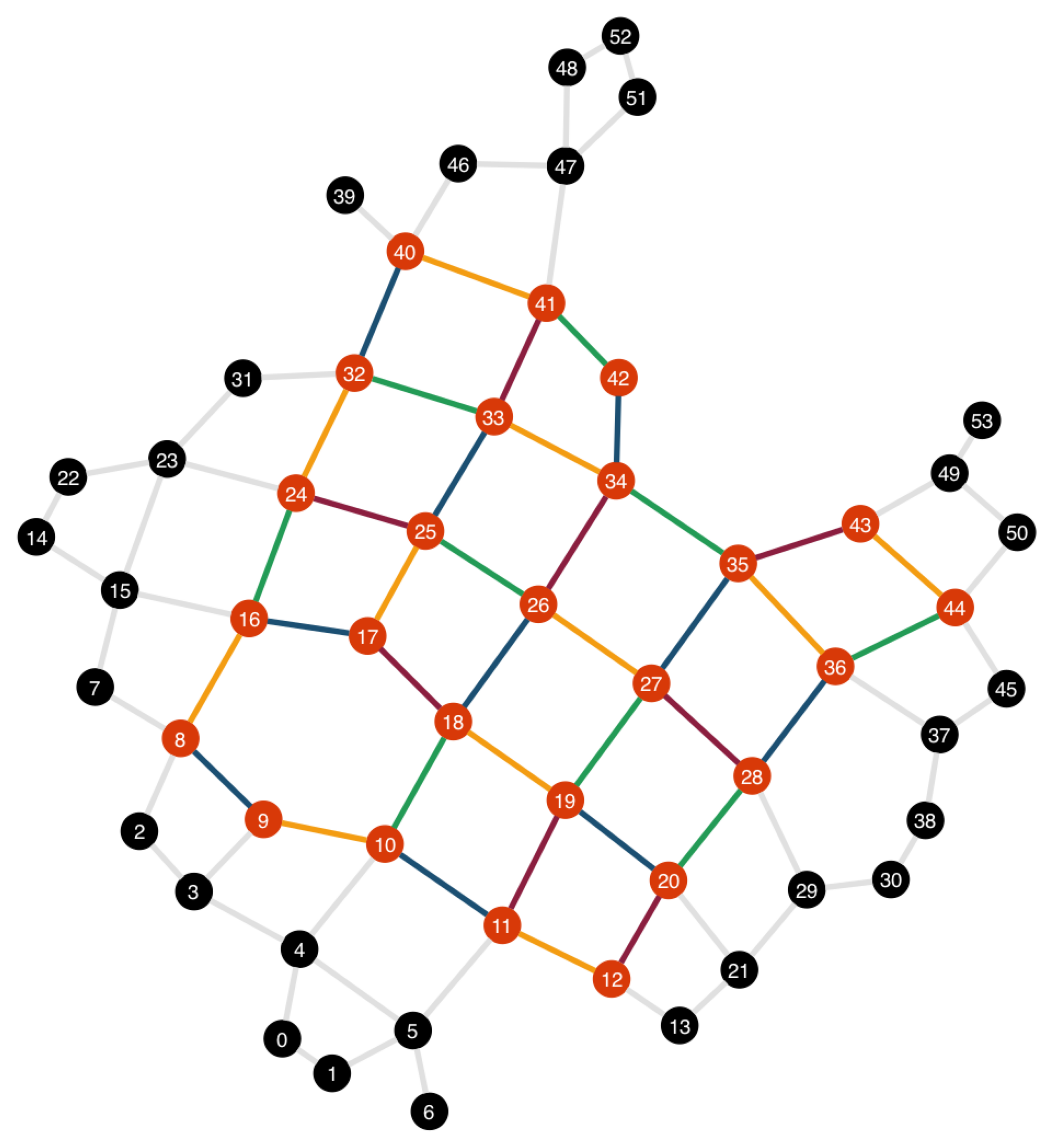}
        \caption{\emph{Emerald} Exp.~5.}
        \label{fig:exp-topology-iqm-5}
    \end{subfigure}

    \vspace{0.75em}

    \begin{subfigure}[b]{0.23\textwidth}
        \centering
        \includegraphics[width=\linewidth]{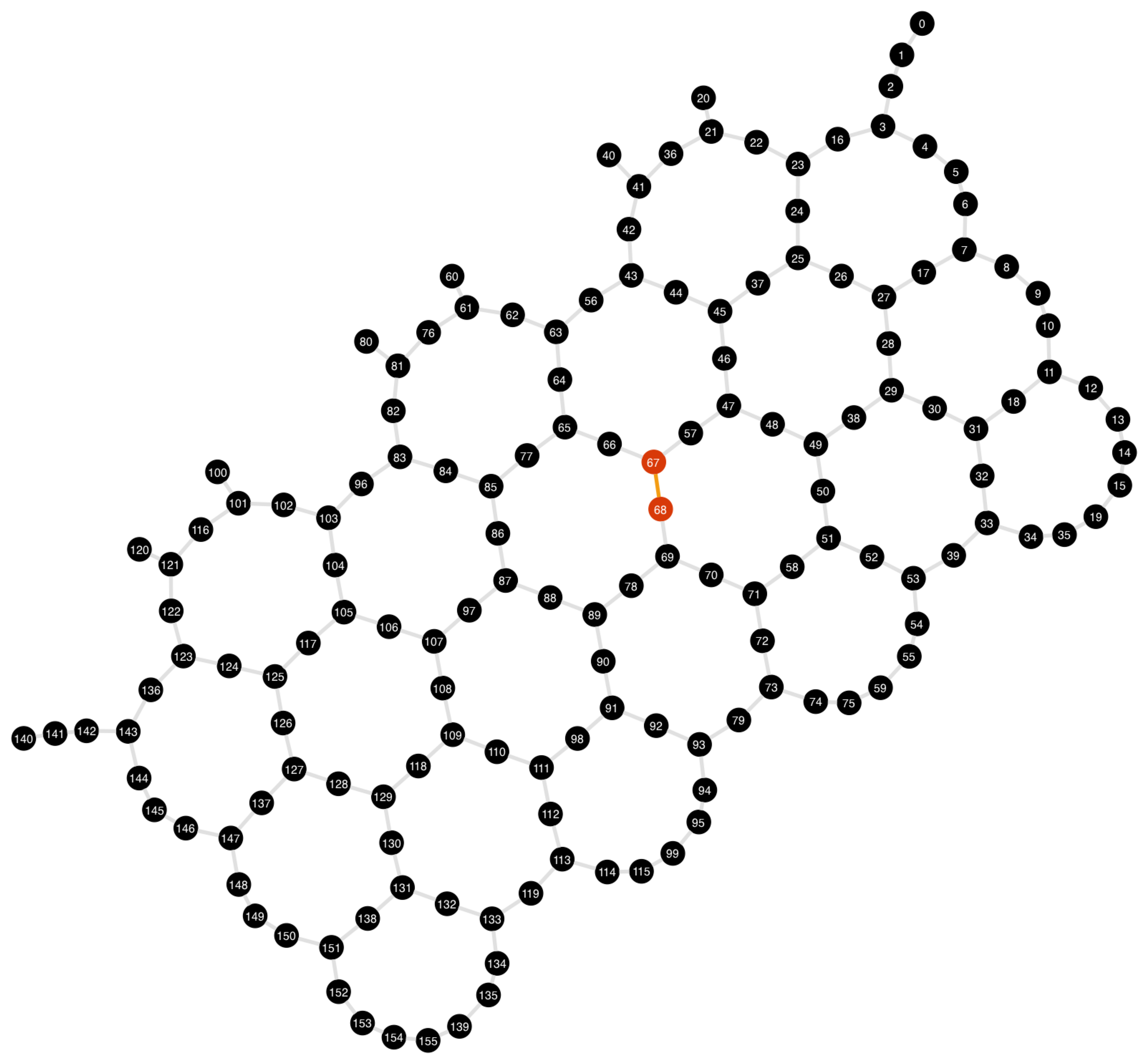}
        \caption{\emph{ibm\_fez} Exp.~1.}
        \label{fig:exp-topology-ibm-1}
    \end{subfigure}\hfill
    \begin{subfigure}[b]{0.23\textwidth}
        \centering
        \includegraphics[width=\linewidth]{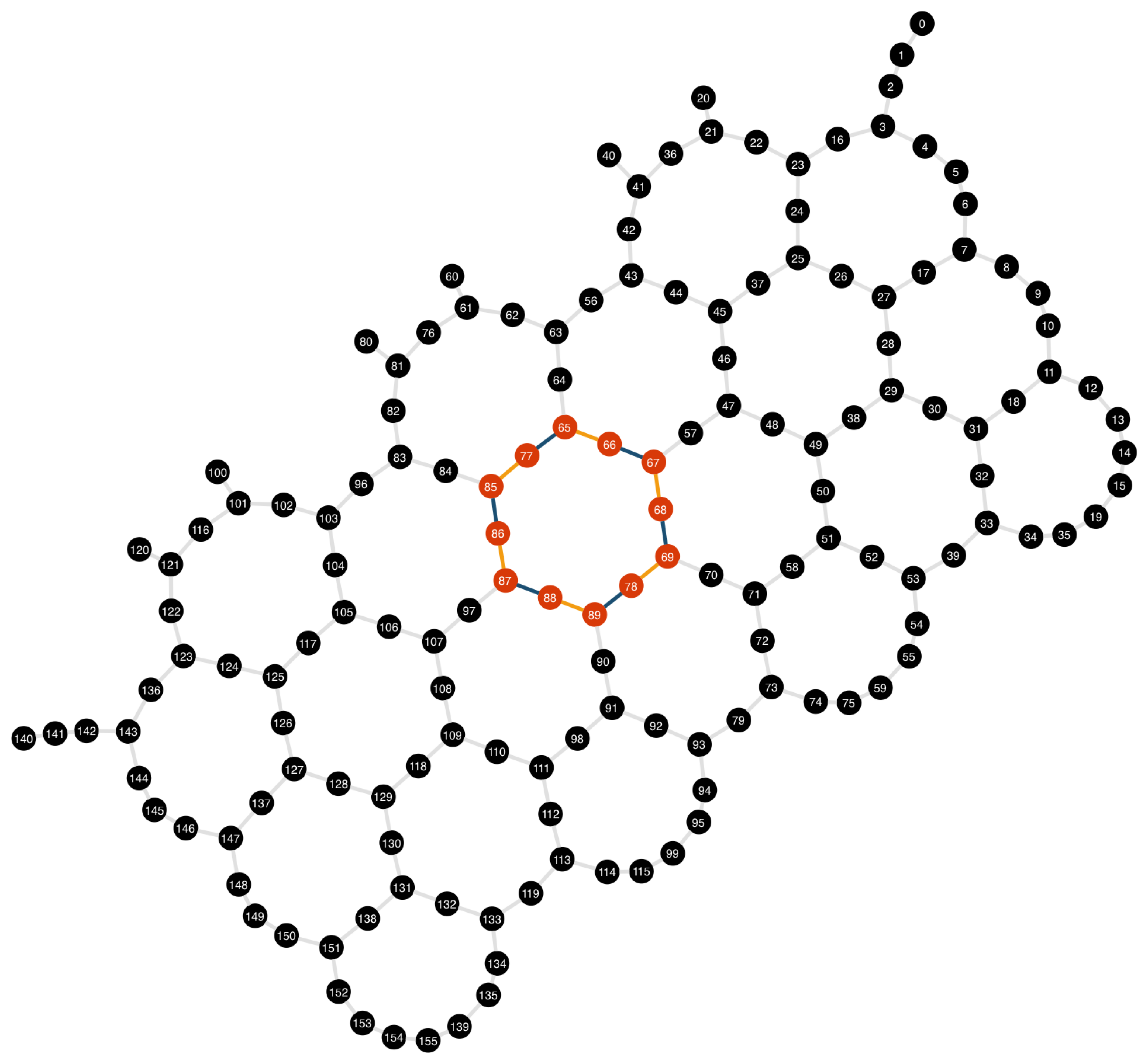}
        \caption{\emph{ibm\_fez} Exp.~2.}
        \label{fig:exp-topology-ibm-2}
    \end{subfigure}\hfill
    \begin{subfigure}[b]{0.23\textwidth}
        \centering
        \includegraphics[width=\linewidth]{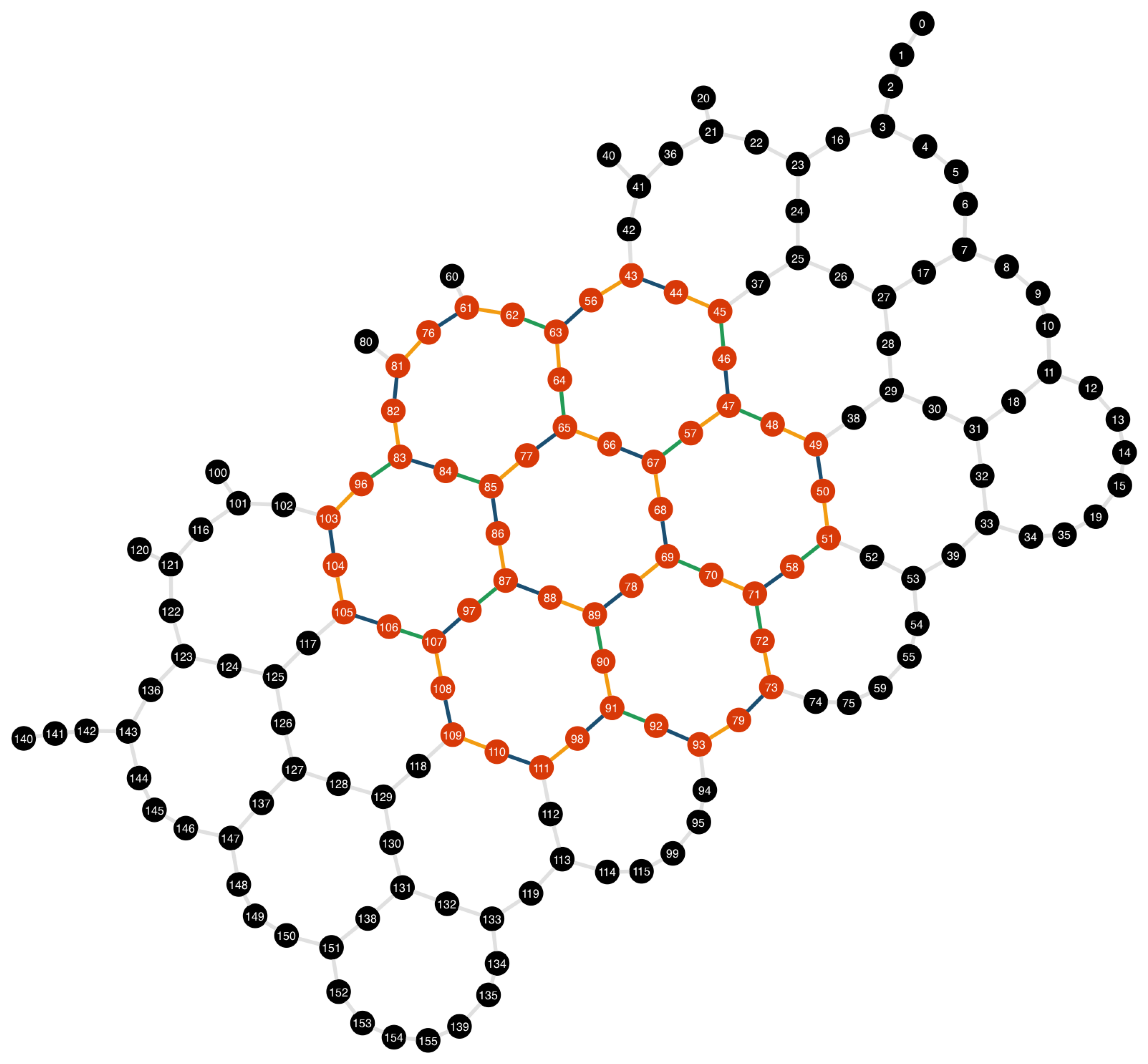}
        \caption{\emph{ibm\_fez} Exp.~3.}
        \label{fig:exp-topology-ibm-3}
    \end{subfigure}\hfill
    \begin{subfigure}[b]{0.23\textwidth}
        \centering
        \includegraphics[width=\linewidth]{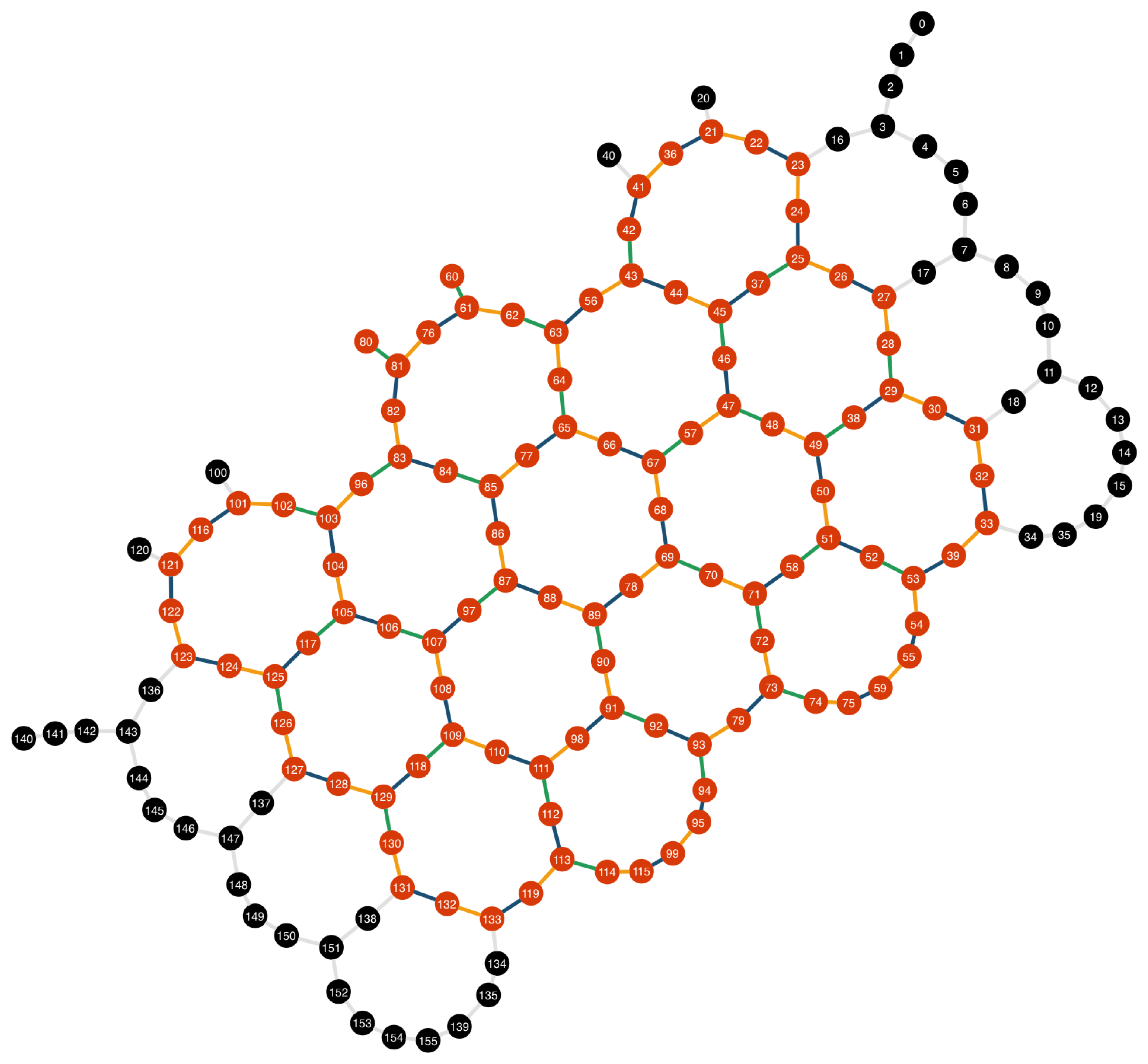}
        \caption{\emph{ibm\_fez} Exp.~4.}
        \label{fig:exp-topology-ibm-4}
    \end{subfigure}

    \caption{Experiment layouts. Top row: IQM \emph{Emerald} (square lattice, $n = 2, 4, 9, 16, 25$). Bottom row: IBM \emph{ibm\_fez} (heavy-hex lattice, $n = 2, 12, 54, 108$). Spectator qubits in black, CZ gate qubits in red; edge sets indicated by like colorm with gray reserved for inactive couplings during experiments.}
    \label{fig:exp-topologies}
\end{figure*}

\subsection{Results}
\label{sec:results}

Detection counts for both devices are summarized in Table~\ref{tab:crosstalk-detection} and in Fig.~\ref{fig:detection-counts}. The NN totals reflect device-specific coupler topology (\emph{Emerald} topology does not support couplers at every lattice position) and rejection criteria described in Section~\ref{sec:ramsey-protocol-and-detection-strategy}, diverging from counts in Table~\ref{tab:experiment-configs}. Detection counts are aggregated at the (spectator, edge-set) level, as described in Section~\ref{sec:ramsey-protocol-and-detection-strategy}.

\begin{table}[ht]
    \centering
    \caption{Detection counts on both devices.}
    \label{tab:crosstalk-detection}
    \begin{tabular}{cc|cc|cc}
    \hline
    \multirow{2}{*}{Device} & \multirow{2}{*}{Exp. ($n$)} & \multicolumn{2}{c|}{Nearest neighbors} & \multicolumn{2}{c}{Non-nearest neighbors} \\
    \cline{3-6}
    & & Detected & Total & Detected & Total \\
    \hline
    \multirow{5}{*}{\emph{Emerald}}
     & 1 (2)  & 1  & 6  & 3  & 45 \\
     & 2 (4)  & 3  & 16  & 7  & 82 \\
     & 3 (9)  & 8  & 26 & 17 & 150 \\
     & 4 (16) & 23 & 39 & 32 & 109 \\
     & 5 (25) & 25 & 43 & 24 & 69 \\
    \hline
    \multirow{4}{*}{\emph{ibm\_fez}}
     & 1 (2)   & 2  & 3   & 7  & 144 \\
     & 2 (12)  & 10 & 12  & 20 & 264 \\
     & 3 (54)  & 13 & 24  & 17 & 267 \\
     & 4 (108) & 20 & 24  & 11 & 114 \\
    \hline
    \end{tabular}
\end{table}

\begin{figure*}[ht]
    \centering
    \begin{subfigure}[b]{0.49\linewidth}
        \includegraphics[width=\linewidth]{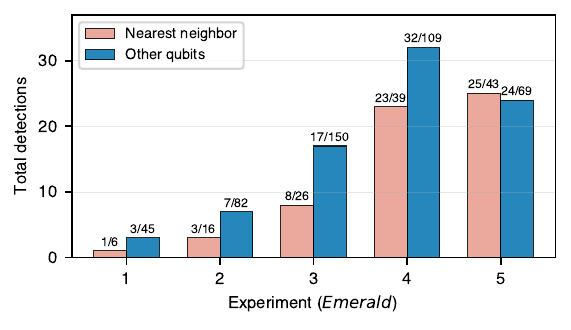}
        \vspace{-20pt}        \caption{\emph{Emerald}, detection threshold $12.85$~kHz.}
        \label{fig:iqm-detection-results}
    \end{subfigure}\hfill
    \begin{subfigure}[b]{0.49\linewidth}
        \includegraphics[width=\linewidth]{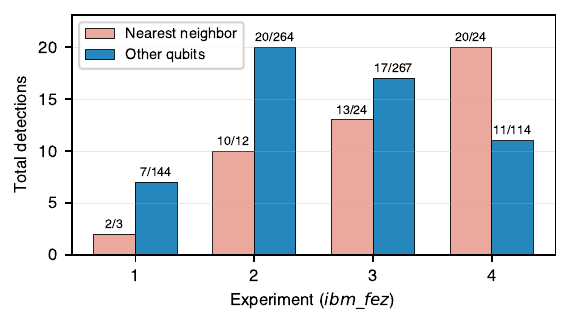}
        \vspace{-20pt}
        \caption{\emph{ibm\_fez}, detection threshold $3.15$~kHz.}
        \label{fig:ibm-detection-results}
    \end{subfigure}
    \caption{Detectable frequency shifts by experiment totaled over all edge sets. Red bars indicate nearest-neighbor detections, blue bars indicate detections on all other spectators.}
    \label{fig:detection-counts}
\end{figure*}

Nearest-neighbor detection peaks at $25/43 \approx 58\%$ on \emph{Emerald} at $n=25$ and $20/24 \approx 83\%$ on \emph{ibm\_fez} at $n=108$, falling short of full detection on the vertex boundary for both devices. We attribute the absence of full saturation to detuning-dependent residual ZZ during active CZs. Tunable couplers suppress ZZ unevenly across spectator-to-gate qubit detunings, so a fraction of NN spectators accumulate sub-threshold phase per CZ. We also observed long-range crosstalk beyond the nearest neighbors in both devices across all experiments. This suggests a side channel Willie may exploit for improved detection.

The two devices exhibit different behavior with respect to long-range crosstalk. On \emph{Emerald}, the non-NN detection rate scales monotonically with~$n$, rising from $\approx 7\%$ at $n=2$ to $\approx 35\%$ at $n=25$. On \emph{ibm\_fez}, the non-NN rate stays in the $5$--$10\%$ range across all four experiments, suggesting consistency across the chip.

To verify that the long-range shifts originate from Alice's CZ gates rather than from coupling between Ramsey spectators, we ran a control experiment on \emph{Emerald}: CZs applied to a single qubit pair $(26, 27)$ while only one spectator (qubit~$44$, four hops away) ran the Ramsey circuit and all other qubits idled. The frequency shift persisted at approximately the same magnitude as when all spectators ran Ramsey circuits in parallel, and the result was reproduced in experiments conducted weeks apart. Full characterization of this long-range coupling for each device is left to future work. We discuss possible mechanisms in the next section.


\section{Conclusion}
\label{sec:conclusion}

We introduced covert quantum computing as a formal security primitive for multi-tenant QPUs and provided its information-theoretic formulation. We adapted the quantum strategy framework~\cite{gutoskiGeneralTheoryQuantum2007,
gutoskiMeasureDistanceQuantum2012} and derived a norm-based covertness criterion, allowing Warden Willie to perform any actions within the confines of physics on his QCUs and quantum memory. We applied our information-theoretic formalism to a model of NISQ superconducting architectures with planar topologies, where the primary modeled crosstalk mechanism is nearest-neighbor spectator phase accumulation from ZZ interactions during two-qubit gates. We developed a vertex-isoperimetric inequality for the heavy-hex lattice and combined it with known results for the square and hexagonal lattices. This yielded a square-root law for superconducting quantum computing: any $n$-qubit circuit admits a placement such that the computation remains covert with $\mathcal{O}(\sqrt{n})$ additional idling qubits along the vertex boundary. Ramsey experiments on IQM's \emph{Emerald} and IBM's \emph{ibm\_fez} confirmed nearest-neighbor crosstalk consistent with this model. However, they also revealed reproducible long-range frequency shifts on spectator qubits outside the boundary. Thus, the square-root law developed here should be interpreted as a guarantee under the nearest-neighbor crosstalk model, rather than as a complete hardware-level guarantee for current devices.

The long-range coupling we observed is itself a contribution, in that it identifies a regime in which the nearest-neighbor noise model underlying Corollary~\ref{cor:square-root-bound-for-covert-computing} fails. One possible mechanism is electromagnetic leakage from an active tunable coupler to the flux line of a distant coupler, partially activating the latter. Full characterization requires chip-level access and experiments over all CZ pairs. The broader implication is that any hardware-level covertness guarantee requires a sufficiently accurate characterization of the system's relevant noise channels, since Willie is assumed to possess this information and enact the optimal detection strategy.

Covert quantum computing is fundamentally a detection problem at the device's physical layer. Unlike attacks that corrupt or extract information at the logical level~\cite{arellanoQubitViseDoubleSidedCrosstalk2025,
campbellSchrodingersToolboxExploring2025,
harperCrosstalkAttacksDefence2025}, error correction cannot protect against an adversary attempting to detect computation. In fact, such actions may reduce Alice's covertness. Hence, covert quantum computing is closely tied to error suppression, which seeks to eliminate unwanted noise in the system.
This first requires characterization and understanding of such noise. As noise determines Willie's detection power, progress on that front reshapes the covertness threat. Furthermore, fully suppressing noise is unrealistic and, therefore, error correction is required for fault tolerance~\cite{bravyiFutureQuantumComputing2022}.
Covertness therefore, remains a meaningful topic beyond the NISQ regime.

Several directions for future work follow naturally beyond noise characterization. Our experimental Ramsey protocol is a fixed prepare-and-measure strategy restricted to a single round ($r = 1$) with no coherent memory coupled to $W^s$, and, therefore, provides only a lower bound on Willie's detection power. Entangled spectators, adaptive operations between rounds, and joint measurement across $W^s$ are likely to tighten the bound, though special consideration needs to be taken to account for Willie's own gate errors. For Alice, it is possible that other choices and tradeoffs exist to remain both reliable and covert. Here we demonstrated that she remains covert for a specific type of noise by using $\mathcal{O}(\sqrt{n})$ qubits and a physical qubit mapping. One such choice extends to the scenario of multiple shots in Section~\ref{sec:unreliable-computation}: she may randomly choose to idle during some shots and compute on others. This is analogous to the sparse-signaling approach in the covertness literature \cite{tahmasbi21signalingcovert}. Furthermore, instead of merely deciding to compute or not, Alice may employ a more sophisticated strategy, adapting to her own observations of Willie's activity as well as other users' activity.
This will contribute to quantum game theory by extending \cite{gutoskiGeneralTheoryQuantum2007, gutoskiMeasureDistanceQuantum2012, chiribellaMemoryEffectsQuantum2008, chiribellaTheoreticalFrameworkQuantum2009} fully into the covertness dimension.
Finally, the landscape of quantum computing architectures is vast. Superconducting, trapped-ion, neutral-atom, photonic, and other platforms encompass their own sub-architectures and implementations. Each of these demands its own analysis of covertness. The work begun here is a first step into exploring that landscape

\section*{Acknowledgment}
The authors benefited from discussions with Saikat Guha and Michael S.~Bullock.

\appendices


\section{Proof of Lemma~\ref{lem:generalize-heavy-isoperimetric}} \label{app:subdivision-lemma-proof}

\begin{proof}
We first prove the inequality \eqref{eq:subdivision-isoperimetric-vertex}
for the vertex boundaries, then describe the modifications required for
\eqref{eq:subdivision-isoperimetric-edge}.

\emph{Achievability.} Let $S^* = S' \cup \{v_{uw} \in S_2 : u \in S' \text{ or } w \in S'\}$. We show $N(S^*) = N(S')$ as subsets of $V(G)$, which gives $|N(S^*)| = |N(S')|$.

$N(S^*) \subseteq N(S')$: Let $x \in N(S^*)$, so $x \notin S^*$ and $x$ has a neighbor in $S^*$. Suppose $x = v_{uw} \in S_2$. Then $v_{uw} \notin S^*$ implies $u, w \notin S'$ by definition of $S^*$ and De Morgan's law. By definition of $S_2$, $v_{uw}$ is adjacent only to $u, w \in V(G')$; since $S^* \cap V(G') = S'$ and $u, w \notin S'$, neither $u$ nor $w$ lies in $S^*$, contradicting $x \in N(S^*)$. Therefore $x \in V(G') \setminus S'$. Since every edge of $G'$ was subdivided, the only neighbors of $x$ in $G$ lie in $S_2$, so $x$ is adjacent in $G$ to some $v_{xw} \in S^*$, which forces $w \in S'$. As $\{x, w\} \in E(G')$, we have $x \in N(S')$.

$N(S^*) \supseteq N(S')$: Let $u \in N(S')$, so $u \notin S'$ and $u$ has a neighbor $w \in S'$ in $G'$. Since $S^* \cap V(G') = S'$, $u \notin S^*$, and $v_{uw} \in S^*$ since $w \in S'$. As $u$ is adjacent to $v_{uw}$ in $G$, we have $u \in N(S^*)$.

\emph{Optimality.} Let $S \subseteq V(G)$ be arbitrary with $S \cap V(G') = S'$. To establish $|N(S)| \geq |N(S')|$, we construct an injection $f: N(S') \to N(S)$. Note that $N(S') \subseteq V(G') \setminus S'$. For each $u \in N(S')$, fix a neighbor $w(u) \in S'$ with $\{u, w(u)\} \in E(G')$, and define
\begin{align}
    f(u) = \begin{cases}
        u & \text{if } v_{u w(u)} \in S, \\
        v_{u w(u)} & \text{if } v_{u w(u)} \notin S.
    \end{cases}
\end{align}
If $v_{u w(u)} \in S$, then $u \notin S$ and $u$ is adjacent to $v_{u w(u)} \in S$, so $f(u) = u \in N(S)$. If $v_{u w(u)} \notin S$, then $v_{u w(u)}$ is adjacent to $w(u) \in S' \subseteq S$, so $f(u) = v_{u w(u)} \in N(S)$.

To see $f$ is injective, consider distinct $u_1, u_2 \in N(S')$:
\begin{enumerate}
    \item If $f(u_1), f(u_2) \in V(G')$, then $f(u_i) = u_i$ and
    $f(u_1) = u_1 \neq u_2 = f(u_2)$.
    \item If $f(u_1), f(u_2) \in S_2$, then
    $f(u_i) = v_{u_i w(u_i)}$. The edges
    $\{u_1, w(u_1)\}$ and $\{u_2, w(u_2)\}$ are distinct: equality as
    unordered pairs would require either $u_1 = u_2$ (ruled out by
    assumption) or $u_1 = w(u_2)$ (ruled out since
    $u_1 \notin S'$ while $w(u_2) \in S'$). Hence
    $f(u_1) \neq f(u_2)$.
    \item If $f(u_1) \in V(G')$ and $f(u_2) \in S_2$ (or vice versa),
    then $f(u_1) \neq f(u_2)$ since $V(G') \cap S_2 = \emptyset$.
\end{enumerate}

For \eqref{eq:subdivision-isoperimetric-edge}, the same argument applies with two changes. The optimal set is $S^* = S' \cup \{v_{uw} \in S_2 : u \in S' \text{ and } w \in S'\}$, where both endpoints must be in $S'$. Under this condition, each boundary edge of $S'$ contributes exactly one boundary edge of $S^*$ in $G$, giving $|\mathcal{E}(S^*)| = |\mathcal{E}(S')|$. For
\emph{optimality}, the injection sends a boundary edge $\{u, w\}$ of $S'$, labeled such that $u \in S'$ and $w \notin S'$, to $\{v_{uw}, w\}$ if $v_{uw} \in S$ and to $\{u, v_{uw}\}$ otherwise.
\end{proof}

\clearpage
\bibliographystyle{IEEEtran}
\bibliography{./papers.bib}

\end{document}